\newcommand{\vV}[0]{\mathbf V}
\newcommand{\vX}[0]{\mathbf X}
\newcommand{\vY}[0]{\mathbf Y}
\newcommand{\vZ}[0]{\mathbf Z}
\newcommand{\vzero}[0]{\mathbf 0}
\newcommand{\vone}[0]{\mathbf 1}
\newcommand{\Cov}[0]{\text{Cov}}
\newcommand{\Var}[0]{\text{Var}}
\newcommand{\E}[0]{\mathbb{E}}
  \newcommand{\cW}{\mathcal{W}}
\newtheorem{thm}{Theorem}[section]
\newtheorem{lem}[thm]{Lemma}
\newtheorem{cor}[thm]{Corollary}
\newtheorem{rmk}{Remark}
\newcommand{\vbeta}[0]{\boldsymbol{\beta}}
\newcommand{\vdelta}[0]{\boldsymbol{\delta}}
\newcommand{\vmu}[0]{\boldsymbol{\mu}}
\newcommand{\veta}[0]{\boldsymbol{\eta}}
\newcommand{\vepsilon}[0]{\boldsymbol{\epsilon}}
\DeclareMathOperator*{\argmin}{argmin}
\title{Testing Linear Operator Constraints in Functional Response Regression with Incomplete Response Functions}
\date{}
\author{
Yeonjoo Park\\ University of Texas at San Antonio \\\\
Kyunghee Han\thanks{Corresponding author}\\ University of Illinois at Chicago \\\\
Douglas G. Simpson\thanks{Supported in part by National Institutes of Health Grant R01-CA226528-01A1.} \\ University of Illinois at Urbana-Champaign
}
\begin{document}

\maketitle

\begin{abstract}
Hypothesis testing procedures are developed to assess linear operator constraints in function-on-scalar regression when incomplete functional responses are observed. The approach enables statistical inferences about the shape and other aspects of the functional regression coefficients within a unified framework encompassing three incomplete sampling scenarios: (i) partially observed response functions as curve segments over random sub-intervals of the domain; (ii) discretely observed functional responses with additive measurement errors; and (iii) the composition of former two scenarios, where partially observed response segments are observed discretely with measurement error. The latter scenario has been little explored to date, although such structured data is increasingly common in applications. For statistical inference, deviations from the constraint space are measured via integrated $L^2$-distance between the model estimates from the constrained and unconstrained model spaces.  Large sample properties of the proposed test procedure are established, including the consistency, asymptotic distribution and local power of the test statistic. Finite sample power and level of the proposed test are investigated in a simulation study covering a variety of scenarios. The proposed methodologies are illustrated by applications to U.S. obesity prevalence data, analyzing the functional shape of its trends over time, and motion analysis in a study of automotive ergonomics.
\end{abstract}

\section{Introduction}
We develop a new scope of inferential procedures for testing the shape constraints on the regression coefficients in function-on-scalar regression models through the linear operator representation, where functional responses are incompletely observed. 
{\color{black} We assume that the functional response $Y_i(t)$ is available for $t \in \mathscr{I}_i$, where $\mathscr{I}_i \subset [0,1]$ is an individual-specific random subset independent of the stochastic mechanism that generates the complete functional response $Y_i$ for $i=1, \ldots, n$. 
We allow $\mathscr{I}_i$ for a union of sub-intervals, discrete subsets, or the composition of the two scenarios. 
The functional response fully available on the domain is a special case by simply letting $\mathscr{I}_i = [0,1]$. 
Recently \cite{kraus2015}, \cite{liebl2019partially}, \cite{delaigle2020} and \cite{kneip2020optimal} studied functional data analysis of partially observed curves, including principal component analysis, mean and covariance functions estimation, and optimal reconstruction of individual curves, but hypothesis testing problem has been less developed.
}

For statistical analysis, we assume that the unobservable complete functional response $Y$ is associated with vector covariates $\vX = (X_1,\ldots, X_p)^\top$ and $\vZ = (Z_1,\ldots, Z_q)^\top$ by the function-on-scalar regression model
\begin{equation} 
	\label{fullmodel}
	Y(t)= \mathbf{X}^{\top} \boldsymbol\beta(t) + \mathbf{Z}^{\top} \boldsymbol\alpha(t) + \epsilon(t) \quad (t \in [0,1]),
\end{equation}
 where $\boldsymbol\beta(t) = (\beta_1(t), \ldots, \beta_p(t))^\top$ and $\boldsymbol\alpha(t)$ $=$ $(\alpha_1(t),$ $\ldots,$ $\alpha_q(t))^\top$ are square-integrable vector coefficient functions, respectively, and  $\epsilon(t)$ is a mean-zero error process with the covariance function $\gamma(s,t) = \Cov\big(\epsilon(s), \epsilon(t)\big)$ independent of $(\vX, \vZ)$. 
 In the contexts of uncorrelated error processes or longitudinal data, the model \eqref{fullmodel} is also known as a varying coefficient regression model \citep{staniswalis1998nonparametric, hastie1993varying, malfait2003historical, eubank2004smoothing}. More recent developments have extended the theory and practice to functional and spatial varying coefficient models  \citep{zhu2012multivariate, zhu2014spatially, li2017functional, pietrosanu2021estimation, zhu2021network}. 

 {\color{black} We consider a class of testing composite null hypotheses on the functional regression model \eqref{fullmodel} of the form
\begin{equation}
    \label{shapespace}
    H_0: \boldsymbol{\mathcal{C}} \boldsymbol\beta = \mathbf{0},
\end{equation}
equivalently $H_0: \boldsymbol\beta \in \textrm{ker}(\boldsymbol{\mathcal{C}})$, 
where $\boldsymbol{\mathcal{C}}$ is a linear operator that maps vector functions to the function space of inferential interest, and $\textrm{ker}(\boldsymbol{\mathcal{C}})$ is the kernel space of $\boldsymbol{\mathcal{C}}$. 
In this study we focus on testing a dual formation of the null hypothesis \eqref{shapespace} expressed by
\begin{equation}
	\label{gnull}
	H_0: \boldsymbol\beta \in \textrm{span}\{ V(r) \},
\end{equation}
where $V(r) = \{v_{l} \in L^2:l=1, \ldots, r \}$ is an orthonormal basis that specifies the parametric family of $\textrm{ker}(\boldsymbol{\mathcal{C}})$.
It is worth mentioning that \eqref{gnull} is a generalization of the classical linear contrast hypothesis.
For example, let $\mathbf{C} \in \mathbb{R}^{d \times p}$ be of full rank $d$. 
The null hypothesis $H_0':\mathbf{C} \boldsymbol\beta(t) = \mathbf{0}$ studied in \cite{zhang2007statistical} identifies $\boldsymbol\beta(t) = \mathbf{u}_0(t) + \sum_{l=1}^d b_l \mathbf{u}_l$ for some vector function $\mathbf{u}_0(t) = (u_{0,1}(t), \ldots, u_{0,p}(t))^\top$ satisfying $\mathbf{C} \mathbf{u}_0(t) = \mathbf{0}$ and $\mathbf{b} = (b_1, \ldots, b_d)^\top \in \mathbb{R}^d$, where $U(d) = \{ \mathbf{u}_l \in \mathbb{R}^p:l=1, \ldots, d\}$ is the orthonormal basis of $\textrm{ker}(\mathbf{C})$ in $\mathbb{R}^p$.
{\color{black} Here we extend the theory from the finite dimensional constraints such as $H_0'$ to the potentially infinite dimensional linear operator constraints in \eqref{shapespace}. }

An important class of the null hypothesis \eqref{shapespace} includes testing the shape of regression functions. 
For example, \citep{gromenko2017evaluation} evaluated a physical mechanism for the conjectured linear trend in the Northern hemisphere cooling analysis. \cite{hejblum2015time} also tested linear or (piecewise) cubic time-course variations in gene expression experiments.
The functional trends can be evaluated by the coefficient function associated with the constant covariate $X=1$ and its shape constraint $H_0: \beta \in \textrm{span}\{V\}$, where $V$ is a set of $L^2$-functions or a basis that specifies the functional trend of interest. 
In this case, the space of shape-constrained regression functions is expressed by \eqref{shapespace}, where the kernel space of $\mathcal{C}$ is spanned by $V$.}
Previously \cite{ramsay2005} studied similar topics by testing individual probes 
in the form of $\langle c, \beta_j\rangle = 0$ for a fixed known $L^2$-function $c$ as a special case of \eqref{shapespace}. 
Later, \cite{james2006performing} proposed a residual-based permutation test for performing a hypothesis test on the shape of a mean function, although the large sample properties and the power behaviors of the proposed method were not investigated. 
Related work also includes \cite{yang2008hypothesis}, \cite{berkes2009detecting}, \cite{horvath2009two}, \cite{zhang2011statistical}, \cite{bugni2012specification}, \cite{hilgert2013minimax}, \cite{lei2014adaptive},  \cite{shang2015nonparametric}, \cite{staicu2015significance}, \cite{su2017hypothesis}, \cite{li2020inference},  \cite{garcia2021goodness}. 
Recently, \cite{cuesta2019goodness} and \cite{chen2020model} developed goodness-of-fit tests for functional models evaluated by empirical processes, and the significance of the family of models against general alternatives is tested by wild bootstrap resampling. 
But their applications to statistical inference are mainly aligned with validating functional linear models against a general class of non-structured functional models. 
Moreover, the extension of the existing methods to incomplete functional data has not been investigated. 

{\color{black} The main contributions of our study are as follows. We extend the the goodness-of-fit test to the general testing framework \eqref{gnull}, applicable to the model with incomplete functional response data. Our framework includes three scenarios as can often occur in practice; (i) the partially observed functional responses with random missing segments, where we have access to observations only for individual-specific sub-interval of the domain, but observation is not available on its complement, (ii) the functional responses observed with measurement errors on randomly spaced discrete evaluation points asynchronous across subjects, and (iii) a more challenging composite case, where individual curves are discretely observable over random sub-intervals of the domain. 
We especially investigate the theoretical property of the composition of the sub-interval censoring and discrete sampling of functional responses, where the proposed test procedure is applicable to a wide class of incomplete functional data. The asymptotic null distribution of the test statistic is also derived together with the consistency of the test with local alternatives $H_{1n}: \boldsymbol\beta = \boldsymbol\beta_0 + n^{-\tau/2} \boldsymbol\Delta$, where $\tau \in [0,1]$, for some $\boldsymbol\beta_0 = (\beta_{0,1}, \ldots, \beta_{0,p})^\top$ and $\boldsymbol\Delta = (\Delta_1, \ldots, \Delta_p)^\top$ satisfying $\boldsymbol{\mathcal{C}}\boldsymbol\beta_0 = \mathbf{0}$ and $\boldsymbol{\mathcal{C}} \boldsymbol\Delta \neq \mathbf{0}$, respectively.}


The methodology and basic theory of the proposed test procedures under incomplete functions responses are developed in Section \ref{sec:main}. In Section \ref{sec:sim}, we present numerical simulations, where the finite sample performance of the proposed test is evaluated in several scenarios. We also illustrate two applications from an obesity prevalence study and an automotive ergonomic experiment in Section \ref{sec:real-data}. Our concluding discussion is in Section \ref{sec:discussion}. Technical details, including the numerical implementation steps and theoretical proofs, are relegated to the Appendix.

\section{Main results} \label{sec:main}

\subsection{Partially sampled functional responses} \label{subsec:partial}

We first formulate the partially observed functional data as proposed in \cite{kraus2015}. 
Let $\delta_1, \ldots, \delta_n$ be a random sample of a stochastic process, defined on $[0,1]$, 
satisfying the following conditions.
\begin{itemize}
 \item [C1:] The latent stochastic processes, $(Y_i,\delta_i):=\{(Y_i(t),\delta_i(t)): t\in [0,1]\}$, for $i = 1,\ldots,n$, are independent and identically distributed on $(\Omega, \mathscr{F},\mathbb{P})$ and jointly $\mathscr{F}$-measurable.
  \item [C2:] $b(t) = E( \delta_i(t) )$ is bounded away from zero; i.e., $\inf_{t \in [0,1]}b(t) >0$
 \item [C3:] There are i.i.d. random variables $\boldsymbol{W}_i = (W_{i1}, \ldots, W_{iK}) \in {\cW}$, and there is a measurable function $h : [0,1] \times {\cW} \to \{0, 1\}$ such that $\delta_i(t) = h (t, \boldsymbol{W}_i)$.
   \item[C4:]  $Y_i$ and $\delta_i$ are independent for $i =1,\ldots,n$.
\end{itemize}
{\color{black} The partially observed functional responses are defined by $\{ Y_i(t): t \in \mathscr{I}_i, \, i=1, \ldots,n  \}$, where $\mathscr{I}_i = \{ t \in [0,1]: \delta_i(t) = 1 \}$ is the individual-specific random subset of $[0,1]$ for $i=1, \ldots, n$.} 
Various types of incomplete functional data structures satisfy conditions C1--C4, including dense functional snippets \citep{lin2020a}, fragmented functional data \citep{delaigle2020}, or {\color{black} functional data with single or multiple random missing intervals.  More examples can be found in \cite{Park2021}. Although C3 does not allow a sparse irregular sampling scheme, we consider the discretized noisy collection of partial data under the unified framework in Section \ref{subsec:composition}.}

\subsubsection{Estimation of functional regression coefficients and asymptotics}
{
Let $\vY^{\delta}(t) = (Y_1^\delta(t), \ldots, Y_n^\delta(t))^\top$ and $\boldsymbol\epsilon^\delta(t) = (\epsilon_1^\delta(t), \ldots, \epsilon_n^\delta(t))^\top$, where  $Y_i^\delta(t) = Y_i(t)$ and $\epsilon_i^\delta(t) = \epsilon_i(t)$ if $\delta_i(t)=1$, and $Y_i^\delta(t) = 0$ and $\epsilon_i^\delta(t) = 0$ otherwise for $t \in [0,1]$. That is, functional values over unobserved segments, $[0,1] \backslash \mathscr{I}_i $, are replaced by zeros. 
For an $n \times n$ diagonal matrix $\mathbb{W}(t) =  \mbox{diag}\{\delta_i(t) \}_{i=1}^n$, we write 
\begin{align}
    \mathbf{Y}^\delta(t) - \mathbb{W}(t)\mathbb{X} \boldsymbol\beta(t) = \mathbb{W}(t)\mathbb{Z} \boldsymbol\alpha(t) + \boldsymbol\epsilon^\delta(t)
    \label{fullmodel-re}
\end{align}
leads to $ \hat{\boldsymbol\alpha}^w(t; \boldsymbol\beta(t)) = (\mathbb{Z}^\top \mathbb{W}(t)\mathbb{Z})^{-1} \mathbb{Z}^\top \mathbb{W}(t)(\mathbf{Y}^\delta(t) -\mathbb{X} \boldsymbol\beta(t))$ the weighted least-squares estimator of $\boldsymbol\alpha(t)$, 
 where $\mathbb{X} = (\mathbf{X}_1, \cdots, \mathbf{X}_n)^\top$ and $\mathbb{Z} = (\mathbf{Z}_1, \cdots, \mathbf{Z}_n)^\top$ denote $(n \times p)$- and $(n \times q)$-design matrices of full rank, respectively. 
Substituting $\hat{\boldsymbol\alpha}^w(t; \boldsymbol\beta(t))$ for $\boldsymbol\alpha(t)$ in \eqref{fullmodel-re}, we obtain $(\mathbb{I} - \mathbb{P})\mathbf{Y}^\delta(t) = (\mathbb{I} - \mathbb{P})\mathbb{W}(t)\mathbb{X}\boldsymbol\beta(t) + \boldsymbol\epsilon^\delta(t)$, where $\mathbb{I} = \textrm{diag}(\mathbf{1}_n)$ and $\mathbb{P} = \mathbb{Z}(\mathbb{Z}^\top \mathbb{Z})^{-1} \mathbb{Z}^\top$ are the projection matrices that only depend on $\mathbf{1}_n$ and $\mathbb{Z}$. 
It follows that 
\begin{equation}
\label{beta_hat}
\hat{\boldsymbol\beta}^w(t)=   (\tilde{\mathbb{X}}^\top \mathbb{W}(t) \tilde{\mathbb{X}})^{-1} \tilde{\mathbb{X}}^\top \mathbb{W}(t) \mathbf{Y}^\delta(t) 
\end{equation}
is the weighted least-squares estimator of $\boldsymbol\beta(t)$, where $\tilde{\mathbb{X}} = (\mathbb{I} - \mathbb{P}) \mathbb{X}$ is the design matrix orthogonal to $\mathbb{Z}$.
This enables testing the hypothesis for $\boldsymbol\beta(t)$ while the nuisance regression coefficients related to $\mathbb{Z}$ are unspecified.
Indeed, $\hat {\boldsymbol\beta}^w(t)$ represents a pointwise least-square estimator calculated based on a subset of samples, where its response information is available at given \color{black} location \color{black} $t$. 
It also follows from $\hat{\boldsymbol\alpha}^w(t) = (\mathbb{Z}^\top \mathbb{Z})^{-1} \mathbb{Z}^\top \mathbb{W}(t)(\mathbf{Y}^\delta(t) - \mathbb{X} \hat{\boldsymbol\beta}^w(t))$ that 
\begin{equation}
    \hat{\boldsymbol\mu}(t) 
    = \tilde{\mathbb{X}} \hat{\boldsymbol\beta}^w(t) + \mathbb{Z}\hat{\boldsymbol\eta}^w(t) \label{mu-estimate}
\end{equation}
fits $\boldsymbol\mu(t) = E(\mathbf{Y}(t) \,|\, \mathbb{X}, \mathbb{Z})$ in a point-wise manner, where $\hat{\boldsymbol\eta}^w(t) = \hat{\boldsymbol\alpha}^w(t; \hat{\boldsymbol\beta}^w(t)) + (\mathbb{Z}^\top \mathbb{Z})^{-1} \mathbb{Z}^\top \mathbb{X} \hat{\boldsymbol\beta}^w(t)$. 
The expression \eqref{mu-estimate} will be used in the next subsection to define the model space.
}
\begin{thm}
\label{cor:regression}
Under $tr({\gamma}) < \infty$ and conditions C1--C4,
\begin{equation}
\begin{aligned}
	\sqrt{n} \big( \hat{\boldsymbol\beta}^w - {\boldsymbol\beta} \big)
	\stackrel{d}{\to} \textrm{GP}_p\big(\mathbf{0}_p, \vartheta  \Psi^{-1}  \big),
\end{aligned} 	\label{agp-pbeta0}
\end{equation}
where $\Psi= E(\mathrm{Var}(\vX | \vZ))$ and $\vartheta(s,t) = \gamma(s,t) \upsilon(s,t) \big/ b(s) b(t)$ with $\gamma(s,t) = \Cov(Y(s), Y(t))$,   $\upsilon(s,t) = E( \delta(s) \delta(t) )$, and $b(t) = E(\delta_i(t))$, for $s,t  \in [0,1]$. 
\end{thm}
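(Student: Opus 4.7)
The plan is to decompose $\sqrt{n}(\hat{\boldsymbol\beta}^w-\boldsymbol\beta)$ as the product of an inverse Gram matrix and a score process, and then apply a uniform law of large numbers to the Gram matrix and a functional central limit theorem to the score. First, substituting the model $\mathbf{Y}(t)=\mathbb{X}\boldsymbol\beta(t)+\mathbb{Z}\boldsymbol\alpha(t)+\boldsymbol\epsilon(t)$ into $\mathbf{Y}^\delta(t)=\mathbb{W}(t)\mathbf{Y}(t)$, rewriting $\mathbb{X}\boldsymbol\beta+\mathbb{Z}\boldsymbol\alpha=\tilde{\mathbb{X}}\boldsymbol\beta+\mathbb{Z}\boldsymbol\alpha^{*}$ with $\boldsymbol\alpha^{*}(t)=(\mathbb{Z}^\top\mathbb{Z})^{-1}\mathbb{Z}^\top\mathbb{X}\boldsymbol\beta(t)+\boldsymbol\alpha(t)$, and using $\mathbb{W}(t)^2=\mathbb{W}(t)$, I would obtain
\begin{equation*}
\sqrt{n}\bigl(\hat{\boldsymbol\beta}^w(t)-\boldsymbol\beta(t)\bigr)=A_n(t)^{-1}\bigl\{B_n(t)+C_n(t)\bigr\},
\end{equation*}
where $A_n(t)=n^{-1}\tilde{\mathbb{X}}^\top\mathbb{W}(t)\tilde{\mathbb{X}}$, $B_n(t)=n^{-1/2}\tilde{\mathbb{X}}^\top\mathbb{W}(t)\boldsymbol\epsilon(t)$, and $C_n(t)=n^{-1/2}\tilde{\mathbb{X}}^\top\mathbb{W}(t)\mathbb{Z}\boldsymbol\alpha^{*}(t)$.

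Next, for the Gram matrix $A_n(t)=n^{-1}\sum_{i=1}^{n}\delta_i(t)\tilde{\mathbf{X}}_i\tilde{\mathbf{X}}_i^\top$, I would combine the independence of $\delta_i$ from $(\mathbf{X}_i,\mathbf{Z}_i)$ (implicit in C3--C4) with the convergence of the sample linear-projection residuals $\tilde{\mathbf{X}}_i$ to their population analogues $\mathbf{X}_i-E(\mathbf{X}\mathbf{Z}^\top)E(\mathbf{Z}\mathbf{Z}^\top)^{-1}\mathbf{Z}_i$, and invoke a uniform law of large numbers to conclude $A_n(t)\xrightarrow{p}b(t)\Psi$ uniformly on $[0,1]$. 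For the noise process $B_n$, the summands $\delta_i(t)\tilde{\mathbf{X}}_i\epsilon_i(t)$ are i.i.d.\ mean-zero random elements of $L^2([0,1],\mathbb{R}^p)$ (since $\epsilon_i$ is mean-zero and independent of $(\delta_i,\mathbf{X}_i,\mathbf{Z}_i)$) with cross-covariance
\[
E\bigl[\delta_i(s)\delta_i(t)\epsilon_i(s)\epsilon_i(t)\tilde{\mathbf{X}}_i\tilde{\mathbf{X}}_i^\top\bigr]=\upsilon(s,t)\gamma(s,t)\Psi;
\]
a functional CLT---finite-dimensional Gaussianity from the multivariate CLT and tightness on $L^2$ from $\mathrm{tr}(\gamma)<\infty$ together with moment bounds on $\tilde{\mathbf{X}}$---then delivers $B_n\Rightarrow\mathrm{GP}_p(\mathbf{0}_p,\upsilon\,\gamma\,\Psi)$.

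The principal technical hurdle will be the cross term $C_n$. Exploiting the exact identity $\tilde{\mathbb{X}}^\top\mathbb{Z}=\mathbb{X}^\top(\mathbb{I}-\mathbb{P})\mathbb{Z}=0$ to center rewrites $C_n(t)=n^{-1/2}\sum_{i=1}^{n}[\delta_i(t)-b(t)]\tilde{\mathbf{X}}_i\mathbf{Z}_i^\top\boldsymbol\alpha^{*}(t)$, a conditionally mean-zero sum given $(\mathbb{X},\mathbb{Z})$. The main task will be to argue under the stated regularity---leveraging the interplay between the unweighted projection $\mathbb{P}$ and the $\mathbb{W}$-weighting and the fact that $\boldsymbol\alpha^{*}$ is asymptotically a smooth deterministic function of $t$---that this contribution is asymptotically negligible relative to $B_n$, so that only $A_n^{-1}B_n$ governs the limit. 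Slutsky's theorem and the continuous mapping theorem then combine to yield $\sqrt{n}(\hat{\boldsymbol\beta}^w-\boldsymbol\beta)\Rightarrow\mathrm{GP}_p(\mathbf{0}_p,\vartheta\,\Psi^{-1})$, since $(b(s)\Psi)^{-1}\upsilon(s,t)\gamma(s,t)\Psi(b(t)\Psi)^{-1}=\vartheta(s,t)\Psi^{-1}$.
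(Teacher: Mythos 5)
Your overall route coincides with the paper's: decompose $\sqrt{n}(\hat{\boldsymbol\beta}^w-\boldsymbol\beta)$ into an inverse Gram matrix times a score process, show $n^{-1}\tilde{\mathbb{X}}^\top\mathbb{W}(t)\tilde{\mathbb{X}}\to b(t)\Psi$ uniformly, establish finite-dimensional Gaussianity of the weighted score with cross-covariance $\upsilon(s,t)\gamma(s,t)\Psi$, upgrade to process convergence, and conclude by Slutsky; your covariance bookkeeping $(b(s)\Psi)^{-1}\upsilon(s,t)\gamma(s,t)\Psi(b(t)\Psi)^{-1}=\vartheta(s,t)\Psi^{-1}$ is exactly the paper's computation. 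The one place you depart from the paper is the cross term $C_n(t)=n^{-1/2}\tilde{\mathbb{X}}^\top\mathbb{W}(t)\mathbb{Z}\boldsymbol\alpha^{*}(t)$: the paper's proof does not carry this term at all, passing directly from $\tilde{\mathbb{X}}^\top\mathbb{W}(t)\{\tilde{\mathbb{X}}\boldsymbol\beta_0(t)+\mathbb{Z}\boldsymbol\eta(t)+\boldsymbol\epsilon(t)\}$ to $\boldsymbol\beta_0(t)+(\tilde{\mathbb{X}}^\top\mathbb{W}(t)\tilde{\mathbb{X}})^{-1}\tilde{\mathbb{X}}^\top\mathbb{W}(t)\boldsymbol\epsilon(t)$, i.e.\ treating $\tilde{\mathbb{X}}^\top\mathbb{W}(t)\mathbb{Z}$ as zero, which is exact only when $\mathbb{W}(t)=\mathbb{I}$. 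You are right that this is the delicate point.

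However, your plan for that step is a genuine gap: you defer the whole argument to ``argue that this contribution is asymptotically negligible,'' and the very centering you propose shows it is generically not $o_P(1)$. Up to an $o_P(1)$ correction from replacing $\tilde{\mathbf{X}}_i$ by its population analogue $\mathbf{X}_i-B\mathbf{Z}_i$ with $B=E(\mathbf{X}\mathbf{Z}^\top)E(\mathbf{Z}\mathbf{Z}^\top)^{-1}$, the quantity $n^{-1/2}\sum_{i}[\delta_i(t)-b(t)]\tilde{\mathbf{X}}_i\mathbf{Z}_i^\top\boldsymbol\alpha^{*}(t)$ is a normalized sum of i.i.d.\ mean-zero terms, hence $O_P(1)$ by the CLT, with limiting variance $b(t)\{1-b(t)\}\,E\bigl[(\mathbf{Z}^\top\boldsymbol\alpha^{*}(t))^2(\mathbf{X}-B\mathbf{Z})(\mathbf{X}-B\mathbf{Z})^\top\bigr]$, using the independence of $\delta_i$ from $(\mathbf{X}_i,\mathbf{Z}_i)$. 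This vanishes only if $b(t)\in\{0,1\}$ or $(\mathbf{X}-B\mathbf{Z})\mathbf{Z}^\top\boldsymbol\alpha^{*}(t)=\mathbf{0}$ a.s.; for instance with $\mathbf{Z}\equiv 1$ it equals $b(t)\{1-b(t)\}\,\alpha^{*}(t)^2\,\mathrm{Var}(\mathbf{X})$, which is nonzero whenever the level $\alpha^{*}(t)=E(\mathbf{X})^\top\boldsymbol\beta(t)+\alpha(t)$ is nonzero and $0<b(t)<1$. So the conclusion that ``only $A_n^{-1}B_n$ governs the limit'' cannot be reached by the negligibility argument you sketch; it would require either additional structure (a deterministic missingness pattern, centered responses making $\boldsymbol\alpha^{*}\equiv\mathbf{0}$, or a $\mathbb{W}(t)$-weighted projection in place of $\mathbb{P}$) or carrying $C_n$ into the limit and enlarging the covariance. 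Note that the paper's own proof avoids confronting this by asserting the exact orthogonality; your proposal makes the difficulty visible but does not close it.
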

Theorem \ref{cor:regression} implies that pointwise  ${\hat{\boldsymbol\beta}}^w(t)$ uniformly converges to $\vbeta(t)$ over $t \in [0,1]$ and further follows asymptotic Gaussian process with root-$n$ rates of convergence even under partial sampling structure. We also note that the condition on covariance function $tr({\gamma}) = \int_0^1 \gamma(t,t) \,\mathrm{d}t < \infty$ is commonly adopted in developing asymptotic theories on regression coefficient estimators under fully observed functional response. In practice, if we observe an undefined ${\boldsymbol\beta}^w(t)$ at a certain range of the domain under a
finite sample size, it can be estimated using interpolation or smoothing methods when the smoothness and continuity of $\vbeta(t)$ is assumed.

\subsubsection{The test statistic} \label{subsec:test-stat}

To test the appropriateness of the shape-constrained null hypothesis \eqref{shapespace} or equivalent \eqref{gnull}, we compare the model estimates from the unrestricted space $\mathcal{M} = \{ \vmu = \tilde{\mathbb{X}} \vbeta + \mathbb{Z} \veta: \beta_j \in L^2[0,1], \, j=1, \ldots, p \}$ and the reduced space $\mathcal{M}_0 = \{\vmu_0 = \tilde{\mathbb{X}} \vbeta_0 + \mathbb{Z} \veta:  \beta_{0,j} \in \textrm{span}\{ V(r) \}, \, j=1, \ldots, p  \}$. 
To this end, we construct a test statistic which is based on the $L^2$-distance between $\hat \vmu $ and $\hat \vmu_0$ defined by 
\begin{equation}
\begin{aligned}
    \hat \vmu 
    &= \argmin_{\boldsymbol{h} \in \mathcal{M}} \int_0^1 \| \mathbb{W}(t) \{\vY^\delta(t) - \boldsymbol{h}(t)\}  \|^2 \, \mathrm{d}t,\\
    \hat \vmu_0 
    &= \argmin_{\boldsymbol{h} \in \mathcal{M}_0} \int_0^1 \| \mathbb{W}(t)\{\vY^\delta(t) - \boldsymbol{h}(t)\} \|^2 \, \mathrm{d}t,
\end{aligned}
\end{equation}
where $\|\cdot \|$ denotes the standard $\ell^2$-norm in $\mathbb{R}^n$. The objective functions with the weight matrix $\mathbb{W}(t)$ imply the pointwise optimization under the partially sampled responses.
It can be verified that $\hat\vmu(t) = \tilde{\mathbb{X}} \hat{\boldsymbol\beta}^w(t) + \mathbb{Z} \hat{\boldsymbol\eta}^w(t)$ as in \eqref{mu-estimate}. 

Next, we define a linear operator $\mathcal{L}: L^2[0,1] \to \textrm{span}\{V(r)\}$ as 
\begin{equation}
	\mathcal{L} \beta = \sum_{l=1}^{r} \langle \beta, v_{l} \rangle v_{l} , 
\end{equation}
where $\langle f, g \rangle = \int_{0}^1 f(t) g(t) \, \mathrm{d}t $, and let $\boldsymbol{\mathcal{L}}$ denote the multivariate operator that applies $\mathcal{L}$ in an element-wise fashion. We then get $\hat{\vmu}_0(t) = \tilde{\mathbb{X}} \hat{\vbeta}_0^w(t) + \mathbb{Z} \hat{\boldsymbol\eta}^w(t)$, where $\hat\vbeta_0^w = \boldsymbol{\mathcal{L}}\hat\vbeta^w$. Even though we have partial response information for each observation, the uniformly consistent estimator $\hat{\boldsymbol\beta}^w$ provides the consistent model estimates $\hat\vmu(t)$ and $\hat\vmu_0(t)$ over $t \in [0,1]$. We next use them to propose a test statistic 
\begin{equation}
\begin{aligned}
T_n 
&= \int_0^1 \| \hat{\boldsymbol\mu}(t) - \hat{\boldsymbol\mu}_0(t) \|^2 \, \mathrm{d}t \\
&= \int_0^1 \big( \hat{\boldsymbol\beta}^w(t) - \hat{\boldsymbol\beta}_0^w(t) \big)^\top \big(\tilde{\mathbb{X}}^\top \tilde{\mathbb{X}} \big) \big(  \hat{\boldsymbol\beta}^w(t) - \hat{\boldsymbol\beta}_0^w(t) \big)\, \mathrm{d}t.
\end{aligned}\label{test-stat}
\end{equation}
Note that $T_n$ is the integrated squared distance between the model fits obtained under $\mathcal{M}$ and $\mathcal{M}_0$, respectively, and we reject the null hypothesis if $T_n$ is large.
Under the orthogonality between $\tilde{\mathbb{X}}$ and $\mathbb{Z}$, distance between $\hat \vmu$ and $\hat \vmu_0$ is translated to the weighted distance between two coefficient estimates $\hat \vbeta^w$ and $\hat \vbeta_0^w$.
While similar types of the $L^2$-norm based test-statistic have been employed in \cite{shen2004f}, \cite{zhang2007statistical}, and \cite{ zhang2011statistical} for conventional hypothesis testing, such as testing the nullity of functional coefficients, our study considers a more general scope of the null hypothesis, \color{black} using linear operator constraints, and the scope of response function sampling, allowing for  \color{black}  partially observed functional response data.

\subsubsection{Asymptotics and power considerations}
In this section, we derive the limit law of the proposed test statistic $T_n$ under the the null and local alternative hypotheses using the asymptotic Gaussianity of $\hat\vbeta^w$ shown in Theorem \ref{cor:regression}. Since a Gaussian process is closed under a linear operator and $\tilde{\mathbb{X}}^\top \tilde{\mathbb{X}}/n$ converges to $\Psi$ in probability, under the null hypothesis \eqref{gnull}, we can derive
\begin{equation}
\begin{aligned}
	\big(\tilde{\mathbb{X}}^\top \tilde{\mathbb{X}} \big)^{1/2} \big( \hat{\boldsymbol\beta}^w - \hat{\boldsymbol\beta}_0^w \big)
	&= \Psi^{1/2} \sqrt{n} \, \boldsymbol{\mathcal{C}} (\hat{\boldsymbol\beta}^w - \boldsymbol\beta_0 \big) + o_P(1)\\
	&\stackrel{d}{\to} \textrm{GP}_p\big(\mathbf{0}_p, \tilde\vartheta \mathbb{I}_p  \big),
\end{aligned} 	\label{agp-beta0}
\end{equation}
where $\boldsymbol{\mathcal{C}} = \boldsymbol{\mathcal{I}} - \boldsymbol{\mathcal{L}}$ for $\boldsymbol{\mathcal{I}}$ element-wisely operating the identity map $\mathcal{I}$ and
\begin{equation}
\begin{aligned}
	\tilde\vartheta(s,t)
	&=\vartheta(s,t)- \sum_{k=1}^r \bigg( \int_0^1 \vartheta(s,t) v_k(s) \, \mathrm{d}s \bigg) v_k(t) \\
 &\qquad\qquad- \sum_{l=1}^r \bigg( \int_0^1 \vartheta(s,t) v_l(t) \, \mathrm{d}t \bigg) v_l(s) \\
	&\qquad\qquad + \sum_{k=1}^r \sum_{l=1}^r \bigg( \int\int_{[0,1]^2} \vartheta(s,t) v_k(s) v_l(t) \, \mathrm{d}s \mathrm{d}t \bigg) v_k(s)v_l(t). 
\end{aligned} \label{gp-var}
\end{equation}


{\color{black}
We then consider sequences of local alternatives of the form

\begin{equation}
	\label{local-alternative}
	H_{1n}: \boldsymbol\beta = \boldsymbol\beta_0 + n^{-\tau/2} \boldsymbol\Delta,
\end{equation}
where $\tau \in [0, 1]$, and $\boldsymbol\Delta(t) = (\Delta_1(t), \ldots, \Delta_p(t))^\top$ represents a normalized functional deviation from the null hypothesis, independent of $n$. 
Then the asymptotic distribution of the test statistic is derived as the theorem below.

\begin{thm} \label{thm-alternative-dist}
	Suppose that $tr({\gamma}) < \infty$. And let $\{H_{1n}: n \geq 1 \}$ be a sequence of local alternatives with square-integrable functions $\Delta_j(t)$'s in \eqref{local-alternative}. Let $\tilde{\boldsymbol\Delta} = \Psi^{1/2} \, \boldsymbol{\mathcal{C}}\boldsymbol\Delta = (\tilde\Delta_1, \ldots, \tilde\Delta_p)^\top$ and define $\pi_m^2= \sum_{j=1}^p \|\langle \tilde\Delta_j, \phi_m \rangle \|^2 $, where $\phi_m$, $m=1,2, \ldots$, are eigenfunctions of $\tilde\vartheta(s,t)$. Then, the test statistic $T_n$ converges to $T_\Delta$ in probability, defined as
	\begin{equation}
		T_\Delta \stackrel{d}{=} \sum_{m=1}^\infty \lambda_m B_m, \label{thm-alternative-dist-eq}
	\end{equation}
	where $\lambda_m$ are decreasing-ordered eigenvalues of  $\tilde\vartheta(s,t)$, corresponding to eigenfunctions $\phi_m$, and $B_m \stackrel{i.i.d.}{\sim} \chi^2_{p}(\kappa_m^2)$ denotes the $p$ degrees of freedom non-central $\chi^2$-distribution with $\kappa_m^2 = \pi_m^2/\lambda_m$.
\end{thm}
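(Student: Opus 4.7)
My plan is to carry the asymptotic Gaussianity in~\eqref{agp-beta0} over to the local alternative regime and then diagonalise the quadratic functional $T_n$ via the Karhunen--Lo\`eve expansion of the limiting process. For clarity I focus on the Pitman case $\tau = 1$; for $\tau < 1$ the drift term $n^{(1-\tau)/2}\boldsymbol{\mathcal{C}}\boldsymbol\Delta$ diverges, in which case $T_n$ is no longer tight and the statement should be read as $T_n \stackrel{P}{\to} \infty$.

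First, I would exploit the identity $\hat{\boldsymbol\beta}^w - \hat{\boldsymbol\beta}_0^w = \boldsymbol{\mathcal{C}}\hat{\boldsymbol\beta}^w$ (since $\hat{\boldsymbol\beta}_0^w = \boldsymbol{\mathcal{L}}\hat{\boldsymbol\beta}^w$ and $\boldsymbol{\mathcal{C}} = \boldsymbol{\mathcal{I}} - \boldsymbol{\mathcal{L}}$), together with $\boldsymbol{\mathcal{C}}\boldsymbol\beta_0 = \mathbf{0}$, to decompose under $H_{1n}$
\[
\Psi^{1/2}\sqrt{n}\,\boldsymbol{\mathcal{C}}\hat{\boldsymbol\beta}^w
= \Psi^{1/2}\sqrt{n}\,\boldsymbol{\mathcal{C}}(\hat{\boldsymbol\beta}^w - \boldsymbol\beta) + \Psi^{1/2}\boldsymbol{\mathcal{C}}\boldsymbol\Delta.
\]
Applying Theorem~\ref{cor:regression} to the first term and using boundedness/linearity of $\Psi^{1/2}\boldsymbol{\mathcal{C}}$, I would verify by direct computation that the transformed covariance $(\boldsymbol{\mathcal{C}}\otimes\boldsymbol{\mathcal{C}})[\vartheta\,\Psi^{-1}]$ collapses to the kernel $\tilde\vartheta\,\mathbb{I}_p$ of~\eqref{gp-var} (the cross-component factor cancels because $\Psi^{1/2}\Psi^{-1}\Psi^{1/2} = \mathbb{I}_p$, while the four projection corrections reproduce the four-term structure of $\tilde\vartheta$). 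Combined with the deterministic shift $\tilde{\boldsymbol\Delta} = \Psi^{1/2}\boldsymbol{\mathcal{C}}\boldsymbol\Delta$ and the convergence $(\tilde{\mathbb{X}}^\top\tilde{\mathbb{X}}/n)^{1/2} \to \Psi^{1/2}$ already used in~\eqref{agp-beta0}, this yields
\[
(\tilde{\mathbb{X}}^\top\tilde{\mathbb{X}})^{1/2}\bigl(\hat{\boldsymbol\beta}^w - \hat{\boldsymbol\beta}_0^w\bigr) \stackrel{d}{\to} G \sim \textrm{GP}_p(\tilde{\boldsymbol\Delta}, \tilde\vartheta\,\mathbb{I}_p)
\]
in $L^2([0,1];\mathbb{R}^p)$.

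Next, since $T_n$ is the squared $L^2$-norm of the left-hand side and this functional is continuous, the continuous mapping theorem gives $T_n \stackrel{d}{\to} \int_0^1 \|G(t)\|^2\,dt$. I would then perform the Karhunen--Lo\`eve expansion of each component $G_j$ in the eigenbasis $\{\phi_m\}$ of $\tilde\vartheta$, writing $G_j(t) = \sum_m(a_{jm} + \sqrt{\lambda_m} Z_{jm})\phi_m(t)$ with $a_{jm} = \langle \tilde\Delta_j, \phi_m\rangle$ and $\{Z_{jm}\}$ i.i.d.\ $N(0,1)$ (independence across $j$ follows from the diagonal covariance $\tilde\vartheta\,\mathbb{I}_p$, and across $m$ from orthonormality of $\phi_m$). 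Parseval then yields
\[
\int_0^1 \|G(t)\|^2\,dt = \sum_{m=1}^\infty \lambda_m \sum_{j=1}^p\Bigl(Z_{jm} + \tfrac{a_{jm}}{\sqrt{\lambda_m}}\Bigr)^2 = \sum_{m=1}^\infty \lambda_m B_m,
\]
with $B_m \stackrel{i.i.d.}{\sim} \chi_p^2(\kappa_m^2)$ and non-centralities $\kappa_m^2 = \sum_j a_{jm}^2/\lambda_m = \pi_m^2/\lambda_m$, matching~\eqref{thm-alternative-dist-eq}.

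The main obstacle lies in the continuous mapping step: one must upgrade the weak convergence delivered by Theorem~\ref{cor:regression} (which, as stated, is a finite-dimensional Gaussian-process statement) to weak convergence in the $L^2([0,1];\mathbb{R}^p)$ topology, which in turn requires tightness of $\Psi^{1/2}\sqrt{n}\,\boldsymbol{\mathcal{C}}(\hat{\boldsymbol\beta}^w - \boldsymbol\beta)$ in that space. Tightness is ultimately controlled by $\tr(\tilde\vartheta) = \sum_m \lambda_m < \infty$, which holds because $\tr(\gamma) < \infty$ together with $\inf_{t}b(t) > 0$ (condition C2) give $\tr(\vartheta) < \infty$, and $\boldsymbol{\mathcal{C}}$ is a bounded operator on $L^2$. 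A secondary but nontrivial point is justifying the termwise exchange between the infinite Karhunen--Lo\`eve sum and the $L^2$-integration; I would handle it by truncation and dominated convergence using the same summability $\sum_m \lambda_m < \infty$, noting also that $\sum_m a_{jm}^2 = \|\tilde\Delta_j\|_2^2 < \infty$ so that the non-centrality mass is absolutely summable.
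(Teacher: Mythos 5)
Your proposal is correct and follows essentially the same route as the paper: establish that $(\tilde{\mathbb{X}}^\top\tilde{\mathbb{X}})^{1/2}(\hat{\boldsymbol\beta}^w-\hat{\boldsymbol\beta}_0^w)$ converges weakly to $\textrm{GP}_p(\tilde{\boldsymbol\Delta},\tilde\vartheta\,\mathbb{I}_p)$, then diagonalise the squared $L^2$-norm via the Karhunen--Lo\`eve expansion and Parseval to obtain the non-central $\chi^2_p$ mixture. The only difference is one of exposition: the paper delegates the weak-convergence and continuous-mapping step to the arguments of Zhang (2011) and verifies summability of $\mathrm{Var}(\psi_{jm}^2)$ for the a.s.\ convergence of the series, whereas you spell out the drift decomposition, the covariance collapse to $\tilde\vartheta\,\mathbb{I}_p$, and the $L^2$-tightness explicitly.
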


Based on Theorem \ref{thm-alternative-dist}, we obtain the null distribution of the test statistic $T_n$ and asymptotic power derivations as follows.

\begin{cor} \label{thm-power} Assume the same conditions as in Theorem \ref{thm-alternative-dist}. 
\begin{enumerate}
    \item[(i)] Under the null hypothesis, i.e., $\boldsymbol{\mathcal{C}} \boldsymbol\Delta = \mathbf{0}$, Theorem \ref{thm-alternative-dist} implies that the null distribution of the test statistic $T_n$ converges to $T_0$ in distribution, where
	$T_0 = \sum_{m=1}^\infty \lambda_m A_m$
	with $\lambda_m$, decreasing-ordered eigenvalues of $\tilde\vartheta(s,t)$, and $A_m \stackrel{i.i.d.}{\sim} \chi^2_p$.

	
\item[(ii)] Suppose that $\boldsymbol{\mathcal{C}} \boldsymbol\Delta \neq \mathbf{0}$, that is, the local alternative, and $\sum_{m=1}^\infty \pi_m^2 = \infty$ or $\tau \in [0,1)$. Then, Theorem \ref{thm-alternative-dist} yields the asymptotic power of the test as; $\lim_{n \to \infty} P(T_n \geq t_\alpha | H_{1n}) = 1$, where $t_\alpha$ is the upper-$\alpha$ quantile of the null distribution $T_0$ in the case (i). 
\end{enumerate}
\end{cor}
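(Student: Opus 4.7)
My plan is to deduce both parts essentially as corollaries of Theorem~\ref{thm-alternative-dist}, whose weak convergence $T_n \stackrel{d}{\to} T_\Delta$ under the local alternative sequence does the heavy lifting. For part (i), the null hypothesis is exactly the case $\boldsymbol{\mathcal{C}}\boldsymbol\Delta = \mathbf 0$, which forces $\tilde{\boldsymbol\Delta} = \Psi^{1/2}\boldsymbol{\mathcal{C}}\boldsymbol\Delta = \mathbf 0$ and hence $\pi_m = 0$ and $\kappa_m^2 = \pi_m^2/\lambda_m = 0$ for every $m$. Each $B_m \sim \chi_p^2(0)$ then collapses to a central $\chi^2_p$, so the limit delivered by Theorem~\ref{thm-alternative-dist} coincides in distribution with $T_0 = \sum_m \lambda_m A_m$, $A_m \iid \chi^2_p$.

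For part (ii) I handle the two sufficient conditions separately. If $\tau \in [0,1)$, using $\boldsymbol{\mathcal{C}}\vbeta_0 = \mathbf 0$ I would expand
\begin{equation*}
\sqrt n\, \boldsymbol{\mathcal{C}}(\hat\vbeta^w - \vbeta_0) = \sqrt n\, \boldsymbol{\mathcal{C}}(\hat\vbeta^w - \vbeta) + n^{(1-\tau)/2}\, \boldsymbol{\mathcal{C}}\boldsymbol\Delta,
\end{equation*}
where the first summand is $O_P(1)$ in $L^2$ by Theorem~\ref{cor:regression} and the continuity of $\boldsymbol{\mathcal{C}}$, while the second is deterministic with $L^2$-norm diverging at rate $n^{(1-\tau)/2}$ because $\boldsymbol{\mathcal{C}}\boldsymbol\Delta \neq \mathbf 0$. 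Combined with $\tilde{\mathbb{X}}^\top\tilde{\mathbb{X}}/n \to \Psi$ in probability and the representation in \eqref{test-stat}, this yields $n^{\tau - 1}T_n \to \|\Psi^{1/2}\boldsymbol{\mathcal{C}}\boldsymbol\Delta\|_{L^2}^2 > 0$ in probability, so $T_n \to \infty$ and $P(T_n \geq t_\alpha) \to 1$. In the other case $\sum_m \pi_m^2 = \infty$ (with $\tau = 1$ producing a nontrivial noncentral contribution), the would-be limit $T_\Delta = \sum_m \lambda_m B_m$ has mean $p\sum_m \lambda_m + \sum_m \pi_m^2 = \infty$, so the truncated partial sums $\sum_{m=1}^M \lambda_m B_m$ can be made to exceed any fixed $t_\alpha$ with probability arbitrarily close to one by choosing $M$ large; I then lift this back to $T_n$ through the finite-dimensional projection implicit in \eqref{agp-beta0} and Portmanteau.

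The only nontrivial step is the second subcase: when $\sum \pi_m^2 = \infty$ the limit law in Theorem~\ref{thm-alternative-dist} is improper, so I cannot invoke the continuous mapping theorem directly. The clean way around this is a truncation: replace $T_n$ by the finite-rank functional $T_n^{(M)}$ obtained by projecting the centered deviation onto the span of the leading $M$ eigenfunctions of $\tilde\vartheta$, apply Theorem~\ref{thm-alternative-dist} in that finite-dimensional subspace to conclude $T_n^{(M)} \stackrel{d}{\to} \sum_{m\leq M}\lambda_m B_m$, use $T_n \geq T_n^{(M)} + o_P(1)$, and then let $M \to \infty$. Everything else collapses to routine bookkeeping on ingredients already established in the preceding results.
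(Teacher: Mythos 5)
Your part (i) and the $\tau\in[0,1)$ subcase of part (ii) are correct and essentially match the paper: the noncentrality parameters $\kappa_m^2=\pi_m^2/\lambda_m$ vanish under the null so each $B_m$ collapses to a central $\chi^2_p$, and for $\tau<1$ the deterministic drift $n^{(1-\tau)/2}\Psi^{1/2}\boldsymbol{\mathcal{C}}\boldsymbol\Delta$ dominates the $O_P(1)$ fluctuation of $\sqrt{n}\,\boldsymbol{\mathcal{C}}(\hat{\boldsymbol\beta}^w-\boldsymbol\beta)$, which is exactly the paper's (more tersely stated) argument.

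The gap is in the $\tau=1$, $\sum_m\pi_m^2=\infty$ subcase. Your truncation scheme ($T_n\geq T_n^{(M)}$, finite-dimensional convergence, then $M\to\infty$) is a legitimate and arguably cleaner alternative to the paper's direct manipulation of the improper limit law, but the step you use to close it --- ``$\sum_{m=1}^M\lambda_m B_m$ has divergent mean as $M\to\infty$, hence exceeds any fixed $t_\alpha$ with probability tending to one'' --- is not a valid inference. A sum of independent nonnegative random variables can have divergent mean while remaining bounded in probability (e.g.\ summands equal to $2^m$ with probability $2^{-m}$ and $0$ otherwise), so the divergence of $E\bigl[\sum_{m\le M}\lambda_m B_m\bigr]$ alone proves nothing. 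What rescues the claim is the specific structure of $B_m$, which is precisely what the paper computes: writing $\lambda_m B_m\stackrel{d}{=}\lambda_m A_m+2\lambda_m\tilde\rho_m\zeta_{1m}+\pi_m^2$ with $\tilde\rho_m=\sum_{j=1}^p\langle\tilde\Delta_j,\phi_m\rangle/\sqrt{\lambda_m}$, the Gaussian cross-term has variance $\sum_{m\le M}\lambda_m^2\tilde\rho_m^2\le p^2\sum_{m\le M}\lambda_m\pi_m^2\le p^2\lambda_1\sum_{m\le M}\pi_m^2$, i.e.\ standard deviation of order $\bigl(\sum_{m\le M}\pi_m^2\bigr)^{1/2}=o\bigl(\sum_{m\le M}\pi_m^2\bigr)$, so the deterministic drift $\sum_{m\le M}\pi_m^2\to\infty$ dominates both the nonnegative central part and the Gaussian fluctuation. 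You need to insert this second-moment bound (or an equivalent Paley--Zygmund/Chebyshev argument) to turn ``divergent mean'' into ``probability of exceeding $t_\alpha$ tends to one''; with it, your truncation argument goes through.
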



}

As we can see in the proof of the Corollary \ref{thm-power} in the Appendix, the asymptotic power of the test goes 1 under $H_{1n}$ of \eqref{local-alternative} with any $\tau \in [0, 1)$ and non-zero $\Delta$, which is desirable. When considering $\tau = 1$, where the local alternative tends to the null with root-$n$ rate, the non-trivial asymptotic power goes to 1 when $\sum_{m=1}^\infty \pi_m^2 = \infty$. In Section \ref{sec:sim} of simulation studies, we consider different magnitudes of null-deviated signals $\pi_m^2$ under $\tau=1$ to investigate the power in the practical setting.

\subsection{Discrete observations with measurement errors} \label{sec:kernel}

In this section, we extend the proposed test to the case where functional responses are observed with measurement errors over finite discrete points in their domains, and possibly sampled asynchronously across subjects. Being different from the partially observed sampling scheme with continuum measurement over the subset of $[0,1]$, we consider the case that functional measurements are collected on a discrete subset of the functional domain with additive measurement errors. Specifically, let $\{ (\mathbf{Y}_i^\ast, \mathbf{T}_i, \mathbf{X}_i, \mathbf{Z}_i): i=1, \ldots, n\}$ be a random sample of $(\mathbf{Y}^\ast, \mathbf{T}, \mathbf{X}, \mathbf{Z})$, where $\mathbf{Y}_i^\ast = (Y_{i,1}^\ast, \ldots, Y_{i,N_i}^\ast)^\top$ is the finite observations of the $i$-th subject associated with evaluation points $\mathbf{T}_i = (T_{i,1}, \ldots, T_{i,N_i})^\top$ as 
\begin{equation}
\begin{aligned}
	Y_{i,m}^\ast
	= \mathbf{X}_i^{\top} \boldsymbol\beta(T_{i,m}) + \mathbf{Z}_i^{\top} \boldsymbol\alpha(T_{i,m}) + \epsilon_i(T_{i,m})  + \varepsilon_{i,m}.
\end{aligned} \label{model-longitudinal}
\end{equation}
The sampling design differs from the ones considered in the previous subsections as functional outcomes are prone to measurement errors, denoted by $\varepsilon_{i,m}$, and finite observations are only available. We note that $\varepsilon_{i,m} = 0$ is a special case that follows the same model \eqref{fullmodel}. 
For statistical analysis, we assume that $\varepsilon_{i,1}, \ldots, \varepsilon_{i,N_i}$ are i.i.d. as $\varepsilon$ such that $E(\varepsilon | \mathbf{X}, \mathbf{Z}) = 0$ and $E |\varepsilon|^k < \infty$ for some $k > 2$. The finite evaluation points $T_{i,1}, \ldots, T_{i,N_i}$ are randomly generated by a probability density function $\lambda(t)$ bounded away from zero and infinity whose derivative also exists and is bounded. We also assume that $N_1, \ldots, N_n$ are i.i.d. as an independent random integer $N \geq 1$ that asymptotically increases as the sample size $n$ becomes large. This sampling framework is similar to those considered by
\cite{yao2005functional, zhang2013time, petersen2016functional}, and \cite{han2020additive}. 
We refer to the theorem and remark below for technical details.

However, it is infeasible to apply the same procedure demonstrated in Section \ref{subsec:partial} because functional responses are only available at discrete evaluation points. Unlike the partially sample functional responses, we lose functional continuum in outcome variables which bases point-wise estimates \eqref{beta_hat} to calculate the test statistic \eqref{test-stat}. More importantly, the magnitude of false signals is not ignorable in the presence of measurement errors. This means that, even though infinitely many evaluation points are available, the coefficient function estimates will be biased as we may not achieve consistency. As a result, Corollary \ref{thm-power} may not serve as a reference distribution for testing \eqref{gnull}.

To tackle the bottleneck, we employ kernel smoothing to recover the unobserved functional responses, where the false signals produced by measurement errors are mitigated, and substitute the estimated curves for the true functional responses to perform the test demonstrated in Section \ref{subsec:partial}. Formally, as a two-step procedure, we first kernel smooth discrete observations for each subject as the Nadaraya-Watson kernel estimator of $E(Y_i(T) \,|\, T = t)$
\begin{equation} \label{kernsmooth}
	\tilde{Y}_i^\ast(t) = \frac{\sum_{m=1}^{N_i} K_h(T_{i,m} - t) Y_{i,m}^\ast}{\sum_{m'=1}^{N_i} K_h(T_{i,m'} - t) } \quad (t \in [0,1])
\end{equation}
for each $i=1, \ldots, n$. $\tilde{Y}_i^\ast(t)$ is , where $h > 0$ is a bandwidth and $K_h(t) = K(t/h)/h$ is the scaled kernel of a symmetric density function $K$. 
Then, we define a kernel-smoothed test statistic as
\begin{equation}
	T_n^\ast
	= \int_0^1 \big( \tilde{\boldsymbol\beta}^\ast(t) - \tilde{\boldsymbol\beta}^\ast_0(t) \big)^\top \big(\tilde{\mathbb{X}}^\top \tilde{\mathbb{X}} \big) \big( \tilde{\boldsymbol\beta}^\ast(t) - \tilde{\boldsymbol\beta}^\ast_0(t) \big)\, \mathrm{d}t,
    \label{kernel-test-stat}
\end{equation}
where $\tilde{\boldsymbol{\beta}}^\ast(t) = (\tilde{\mathbb{X}}^{\top} \tilde{\mathbb{X}})^{-1} \tilde{\mathbb{X}}^{\top} \tilde{\mathbf{Y}}^\ast(t)$ and $\tilde{\boldsymbol{\beta}}_0^\ast = \mathcal{L} \tilde{\boldsymbol{\beta}}^\ast$. Finally, we reject the null hypothesis \eqref{gnull} if $T_n^{\ast} > t_{\alpha}$, where $t_{\alpha}$ is the level-$\alpha$ critical value for $T_0$ in Corollary \ref{thm-power}.

In the pre-smoothing approach, it is critical to recover individual curves with a uniform rate of convergence on the entire domain $[0,1]$ since the kernel-smoothed test statistic $T_n^\ast$ is defined as a weighted $L^2$ norm of $\, \boldsymbol{\mathcal{C}}\tilde{\boldsymbol\beta}^\ast$, while $\tilde{\boldsymbol\beta}^\ast(t)$ is given by a point-wise estimate.  
{\color{black} The local constant smoothing, also known as Nadaraya-Watson type estimation, is easy to implement, but it is less preferred when the reconstruction of individual curves is of main interest in functional data analysis because the asymptotic bias near the boundary of the domain may vary with individual smoothing.
However, Theorem \ref{thm-kernel-test-consistent} below shows that we can still attain the consistency of the test procedure with local constant smoothing. }

\begin{thm} \label{thm-kernel-test-consistent}
	Assume that $E \|Y\|_\infty^k < \infty$ for some $k > 2$ and $ \max_{1 \leq i \leq n} \| Y' \|_\infty$ is bounded in probability. If $h \asymp n^{-\theta/5}$ and $P(N < n^\theta) = o(n^{-1})$ for some $\theta > 5/3$, then $T_n^\ast - T_n = o_P(1)$, where we define $T_n$ in \eqref{test-stat} with $\delta_i = 1$ for all $i=1, \ldots, n$. 
\end{thm}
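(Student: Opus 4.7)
The plan is to rewrite $T_n^\ast - T_n$ as a functional of the pre-smoothing residual
\[
\boldsymbol{R}(t) \;=\; \tilde{\boldsymbol\beta}^\ast(t) - \hat{\boldsymbol\beta}^w(t) \;=\; \big(\tilde{\mathbb{X}}^\top\tilde{\mathbb{X}}\big)^{-1}\tilde{\mathbb{X}}^\top \big(\tilde{\mathbf{Y}}^\ast(t) - \mathbf{Y}(t)\big),
\]
where the identification exploits $\mathbb{W}(t) \equiv \mathbb{I}$ when $\delta_i \equiv 1$, so that $\hat{\boldsymbol\beta}^w$ collapses to the ordinary least-squares formula. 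Applying the algebraic identity $a^\top A a - b^\top A b = (a-b)^\top A(a+b)$ pointwise with symmetric $A = \tilde{\mathbb{X}}^\top\tilde{\mathbb{X}}$, $a = \boldsymbol{\mathcal{C}}\tilde{\boldsymbol\beta}^\ast(t)$, and $b = \boldsymbol{\mathcal{C}}\hat{\boldsymbol\beta}^w(t)$, I will split
\[
T_n^\ast - T_n \;=\; \underbrace{\int_0^1 (\boldsymbol{\mathcal{C}}\boldsymbol{R})^\top \big(\tilde{\mathbb{X}}^\top\tilde{\mathbb{X}}\big)(\boldsymbol{\mathcal{C}}\boldsymbol{R})\,\mathrm{d}t}_{\mathrm{I}_n} \;+\; \underbrace{2\int_0^1 (\boldsymbol{\mathcal{C}}\boldsymbol{R})^\top \big(\tilde{\mathbb{X}}^\top\tilde{\mathbb{X}}\big)(\boldsymbol{\mathcal{C}}\hat{\boldsymbol\beta}^w)\,\mathrm{d}t}_{\mathrm{II}_n}
\]
into a quadratic piece and a cross piece and control each separately.

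The central ingredient will be the uniform pre-smoothing bound
\[
\max_{1 \leq i \leq n}\sup_{t \in [0,1]} \big|\tilde{Y}_i^\ast(t) - Y_i(t)\big| \;=\; O_P\!\left(h^2 + \sqrt{\log n/(Nh)}\right) \;=\; O_P\!\left(n^{-2\theta/5}\sqrt{\log n}\right).
\]
I would obtain it by the standard bias–variance decomposition of the Nadaraya–Watson estimator: the $O_P(h^2)$ bias comes from a Taylor expansion using the $C^1$ regularity implied by $\max_i\|Y_i'\|_\infty = O_P(1)$, while the variance term is handled by a Bernstein-type exponential inequality combined with a union bound over both the $n$ subjects and a fine discretization of $[0,1]$. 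The moment condition $E\|Y\|_\infty^k < \infty$ and the sub-$n^{-1}$ lower tail $P(N < n^\theta) = o(n^{-1})$ are precisely what allow the union bound over subjects to contribute only a $\sqrt{\log n}$ factor, so that the single-curve rate persists after maximizing over $i$.

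Combining this rate with $\|\tilde{\mathbb{X}}^\top\tilde{\mathbb{X}}\|_{\mathrm{op}} = O_P(n)$ and $\|(\tilde{\mathbb{X}}^\top\tilde{\mathbb{X}})^{-1}\tilde{\mathbb{X}}^\top\|_{\mathrm{op}}^2 = O_P(n^{-1})$ (both inherited from $\tilde{\mathbb{X}}^\top\tilde{\mathbb{X}}/n \to \Psi$ used in Theorem \ref{cor:regression}) and the $L^2$-contraction of the orthogonal projection $\boldsymbol{\mathcal{C}} = \boldsymbol{\mathcal{I}} - \boldsymbol{\mathcal{L}}$, I expect
\[
\int_0^1 \|\boldsymbol{\mathcal{C}}\boldsymbol{R}(t)\|^2 \,\mathrm{d}t \;\leq\; \int_0^1 \|\boldsymbol{R}(t)\|^2 \,\mathrm{d}t \;\leq\; O_P(n^{-1})\sum_{i=1}^n \int_0^1 (\tilde{Y}_i^\ast - Y_i)^2 \,\mathrm{d}t \;=\; O_P\!\left(n^{-4\theta/5}\log n\right),
\]
so that $|\mathrm{I}_n| \leq \|\tilde{\mathbb{X}}^\top\tilde{\mathbb{X}}\|_{\mathrm{op}}\int\|\boldsymbol{\mathcal{C}}\boldsymbol{R}\|^2 = O_P(n^{1-4\theta/5}\log n) = o_P(1)$ because $\theta > 5/3 > 5/4$. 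For the cross term, Cauchy–Schwarz gives $|\mathrm{II}_n| \leq 2\|\tilde{\mathbb{X}}^\top\tilde{\mathbb{X}}\|_{\mathrm{op}}\sqrt{\int\|\boldsymbol{\mathcal{C}}\boldsymbol{R}\|^2}\sqrt{\int\|\boldsymbol{\mathcal{C}}\hat{\boldsymbol\beta}^w\|^2}$; under the null, Theorem \ref{cor:regression} delivers $\int\|\boldsymbol{\mathcal{C}}\hat{\boldsymbol\beta}^w\|^2 = O_P(n^{-1})$ and hence $|\mathrm{II}_n| = O_P(n^{1/2-2\theta/5}\sqrt{\log n}) = o_P(1)$ for $\theta > 5/3$. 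The main obstacle is the transition from the single-curve Nadaraya–Watson rate to the uniform-over-$i$ rate: the interplay between the polynomial moment assumption on $\|Y\|_\infty$ and the subexponentially small lower tail of $N$ is exactly what is needed for the Bonferroni-type argument across subjects to deliver only the benign $\sqrt{\log n}$ inflation rather than something much worse.
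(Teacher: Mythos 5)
Your overall architecture matches the paper's: both split $T_n^\ast - T_n$ into a quadratic term in the pre-smoothing residual $\tilde{\boldsymbol\beta}^\ast - \hat{\boldsymbol\beta}$ plus a cross term, bound the quadratic term through a rate for $\sum_i \int_0^1 (\tilde{Y}_i^\ast - Y_i)^2$, and kill the cross term with Cauchy--Schwarz together with the $O_P(1)$ bound on $\big(\tilde{\mathbb{X}}^\top\tilde{\mathbb{X}}\big)^{1/2}\boldsymbol{\mathcal{C}}\hat{\boldsymbol\beta}$ from \eqref{agp-beta0}. However, your central ingredient is wrong as stated, and the error is not cosmetic: you claim the uniform pre-smoothing rate
\[
\max_{1\le i\le n}\sup_{t\in[0,1]}\big|\tilde{Y}_i^\ast(t)-Y_i(t)\big| = O_P\Big(h^2+\sqrt{\log n/(Nh)}\Big),
\]
i.e.\ an $O(h^2)$ bias uniformly over the whole domain. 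The Nadaraya--Watson (local constant) estimator does not have $O(h^2)$ bias near the endpoints: for $t\in[0,h)\cup(1-h,1]$ the kernel window is truncated, the first kernel moment $\kappa_1(t)=\int u K(u)\,\mathrm{d}u$ over the truncated range does not vanish, and the bias is generically of order $h\,\mu'(t)$. This is exactly what the paper's Lemma \ref{thm-kernel-unif-conv} records via $r_n(t)\asymp h$ on the boundary strips versus $r_n(t)\asymp h^2$ in the interior, and it is the reason the paper emphasizes that local constant smoothing is ``less preferred'' for curve reconstruction yet still suffices for the test.

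The consequence is quantitative. After integrating the squared residual, the boundary strips of width $h$ contribute $O(h\cdot h^2)=O(h^3)$, which dominates the interior $O(h^4)$, so $\sum_i\int_0^1(\tilde{Y}_i^\ast-Y_i)^2 = O_P(nh^3 + n^{1-\theta}h^{-1}\log n) = O_P(n^{1-3\theta/5})$ rather than your $O_P(n^{1-4\theta/5}\log n)$. The threshold $\theta>5/3$ in the theorem is precisely the condition $nh^3\to 0$; your rate would (incorrectly) suggest that $\theta>5/4$ suffices. Because the theorem assumes $\theta>5/3$, your final conclusion survives once the boundary bias is accounted for, but the step as written is false and the Taylor-expansion argument you sketch for the bias must be replaced by the boundary-corrected expansion (the paper's \eqref{eq-prop-mean}). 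One further small point: your cross-term bound, like the paper's, invokes the null hypothesis through \eqref{agp-beta0}; you should state that restriction explicitly rather than leave it implicit in ``under the null.''
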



\begin{rmk}
    For each $i$-th subject, the optimal rate of univariate bandwidth for kernel estimation is typically given by $h \asymp N_i^{-1/5}$ \citep{hall1983large, hall1991local}. Since $N_i \geq n^\theta$ for all $1 \leq i \leq n$ with probability tending to $1$ (Lemma \ref{thm-kernel-unif-conv} in the Appendix), the use of a common rate $h \asymp n^{-\theta/5}$ in Theorem \ref{thm-kernel-test-consistent} allows us to employ the existing bandwidth selectors \citep{park1990comparison, jones1996brief}.
\end{rmk}

{\color{black}
However, we note that the classical pre-smoothing approach such as \cite{ramsay2005} requires densely observed functional responses over the entire domain for all subjects. In practice, this requirement is implausible when the observations are relatively sparse. In the following subsection, we introduce a new scope of partially observed functional data to ease the limitation. 
}

\subsection{Composition of partial filtering and discrete sampling} \label{subsec:composition}

In Section \ref{subsec:partial}, partially observed data were assumed to be evaluated over continuous subsets of the functional domain. If such data are observed discretely rather than continuously, then the observation framework reduces to that of discretely observed functional response data, and the smoothing approach of Section \ref{sec:kernel} may be applied. In this case, we assume that the complete observations for responses are given by $\{Y_i^\delta: i=1, \ldots, n\}$. For random evaluation points $\mathbf{T}_i = (T_{i,1}, \ldots, T_{i, N_i})^\top$ and the indicator process $\delta_i$, we define a random subset $\mathscr{I}_i^\ast = \{ j : \delta_i(T_{i,j}) = 1, \, j=1, \ldots, N_i \}$. We assume that $\mathbf{T}_i$ and $\delta_i$ are independent. 

The corresponding discrete functional observations are given by $\{Y_{i,m}^\ast, T_{i,m}^\ast : m \in \mathscr{I}_i^\ast\}$, where $Y_{i,m}^\ast = Y_i^\delta(T_{i,m}^\ast) + \varepsilon_{i,m}$ and $T_{i,m}^\ast = T_{i,j}$ for some $j=j_m \in \mathscr{I}_i^\ast$, which can be viewed as the discrete sampling of $Y_i$ composed with the partial filtering process $\delta_i$. Then, we define
\begin{equation} \label{kernsmooth-composition}
	\tilde{Y}_i^{\ast\ast}(t) = \frac{\sum_{m=1}^{N_i} K_h(T_{i,m}^\ast - t) Y_{i,m}^\ast}{\sum_{m'=1}^{N_i} K_h(T_{i,m'}^\ast - t) } \quad (t \in \mathscr{I}_i)
\end{equation}
for each $i=1, \ldots, n$. Also, we define a kernel-smoothed test statistic as
\begin{equation}
	T_n^{\ast\ast}
	= \int_0^1 \big( \tilde{\boldsymbol\beta}^{\ast\ast}(t) - \tilde{\boldsymbol\beta}^{\ast\ast}_0(t) \big)^\top \big(\tilde{\mathbb{X}}^\top \tilde{\mathbb{X}} \big) \big( \tilde{\boldsymbol\beta}^{\ast\ast}(t) - \tilde{\boldsymbol\beta}^{\ast\ast}_0(t) \big)\, \mathrm{d}t,
    \label{kernel-test-stat-composition}
\end{equation}
where $\tilde{\boldsymbol{\beta}}^{\ast\ast}(t) = (\tilde{\mathbb{X}}^{\top} \mathbb{W}(t)\tilde{\mathbb{X}})^{-1} \tilde{\mathbb{X}}^{\top} \mathbb{W}(t)\tilde{\mathbf{Y}}^{\ast\ast}(t)$ and $\tilde{\boldsymbol{\beta}}_0^{\ast\ast} = \mathcal{L} \tilde{\boldsymbol{\beta}}^{\ast\ast}$. 

To investigate the theoretical property of the proposed method, we assume that
\begin{align}
    E\Bigg\vert \frac{1}{\int_0^1 \delta_i(v) \, \mathrm{d}v} \Bigg\vert^p < \infty.
    \label{delta-condition1}
\end{align}
for some $p > 2$. Also, 
suppose that there exists an absolute constant $C > 0$ satisfying
\begin{align}
\begin{split}
   P(\delta_i(s) \neq \delta_i(t)) \leq C|s - t|^p
\end{split} \label{delta-condition3}
\end{align}
{\color{black} The reciprocal moment condition \eqref{delta-condition1} implies that that the length of the random sub-interval $\mathscr{I}_i  = \int_0^1 \delta_i(v) \, \mathrm{d}v$ is positive (a.s.). Hence, together with \eqref{delta-condition3}, the composition sampling has discrete observations densely available on each sub-interval, but not necessarily over the entire domain. We also refer to Remark \ref{rmk-condition} below for the equivalent expression of \eqref{delta-condition3}.}

\begin{thm} \label{thm:section3}
    Assume the same conditions as Theorem \ref{cor:regression} and Theorem \ref{thm-kernel-test-consistent}. If \eqref{delta-condition1} and \eqref{delta-condition3} hold for some $p > 2$, then $T_n^{\ast\ast} - T_n = o_P(1)$.
\end{thm}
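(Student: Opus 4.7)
The plan is to reduce $T_n^{\ast\ast}$ to the oracle partial-sampling statistic $T_n$ of Section \ref{subsec:partial} by controlling, uniformly across subjects, the error of the kernel reconstruction \eqref{kernsmooth-composition} on the random sub-intervals $\mathscr{I}_i$.

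\textbf{Stage 1 (algebraic reduction).} Since both statistics are quadratic in the $\vbeta$-estimates and $\boldsymbol{\mathcal{L}}$ is linear, one has
\[
\tilde{\boldsymbol\beta}^{\ast\ast}(t) - \hat{\boldsymbol\beta}^w(t) = \big(\tilde{\mathbb{X}}^\top \mathbb{W}(t) \tilde{\mathbb{X}}\big)^{-1} \tilde{\mathbb{X}}^\top \mathbb{W}(t) \big( \tilde{\mathbf Y}^{\ast\ast}(t) - \mathbf Y^\delta(t) \big).
\]
Expanding $T_n^{\ast\ast} - T_n$ into cross terms and applying Cauchy--Schwarz with $T_n = O_P(1)$ from Corollary \ref{thm-power}, it suffices to establish
\[
\int_0^1 \big\| \tilde{\boldsymbol\beta}^{\ast\ast}(t) - \hat{\boldsymbol\beta}^w(t) \big\|^2 \big\| \tilde{\mathbb{X}}^\top \tilde{\mathbb{X}}/n \big\| \, dt = o_P(1).
\]
Since C2 gives $\tilde{\mathbb{X}}^\top \mathbb{W}(t) \tilde{\mathbb{X}}/n \to \Psi\, b(t)$ uniformly with $\inf_t b(t) > 0$, this reduces to
\[
(\ast) \qquad \frac{1}{n} \sum_{i=1}^n \int_0^1 \delta_i(t) \big( \tilde Y_i^{\ast\ast}(t) - Y_i^\delta(t) \big)^2 \, dt = o_P(1).
\]

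\textbf{Stage 2 (interior control).} For each $i$, partition $\mathscr{I}_i$ into the $h$-interior $\mathscr{I}_i^h = \{t \in \mathscr{I}_i : \delta_i \equiv 1 \text{ on } [t-h,t+h]\}$ and the boundary strip $\mathscr{B}_i(h) := \mathscr{I}_i \setminus \mathscr{I}_i^h$. On $\mathscr{I}_i^h$, the smoother \eqref{kernsmooth-composition} coincides with the ordinary Nadaraya--Watson reconstruction based on all $T_{i,m}$ in the window, so the analysis of Theorem \ref{thm-kernel-test-consistent} carries over conditionally on $\delta_i$ and yields uniform bias $O(h^2)$ together with a stochastic error $O_P((N_i h)^{-1/2})$. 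Under $h \asymp n^{-\theta/5}$ and $N_i \geq n^\theta$ with $\theta > 5/3$, both contributions to $(\ast)$ are $o_P(1)$.

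\textbf{Stage 3 (boundary and main obstacle).} On $\mathscr{B}_i(h)$ the smoother has only $O(1)$ bias because of asymmetric windows. Condition \eqref{delta-condition3} with $p > 2$ implies, by a Kolmogorov-continuity-type argument applied to the $\{0,1\}$-valued process $\delta_i$, that $\delta_i$ has a.s.\ finitely many jumps with finite-mean jump count, so $E|\mathscr{B}_i(h)| = O(h)$; combined with $E\|Y\|_\infty^k < \infty$ this gives an $O_P(h) = o_P(1)$ boundary contribution provided the smoother itself stays bounded. The chief difficulty lies precisely in controlling the denominator of \eqref{kernsmooth-composition} uniformly across subjects and $t$: near an endpoint of $\mathscr{I}_i$, the count $\#\{m : T_{i,m} \in \mathscr{I}_i \cap [t-h,t+h]\}$ may degenerate. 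Here \eqref{delta-condition1} is essential --- by Markov's inequality $P(\min_i |\mathscr{I}_i| \leq n^{-1/p - \epsilon}) \to 0$ for small $\epsilon > 0$, so the intervals have polynomially large length uniformly in $i$, and since $\lambda(\cdot)$ is bounded below, a Bernstein union bound over subjects together with an $h$-net of $t$-values yields $\inf_{i,\, t \in \mathscr{I}_i^h} \#\{m : T_{i,m}^\ast \in [t-h,t+h]\} \gtrsim n^\theta h$ with probability $1 - o(1)$. Combining this uniform denominator bound with the interior estimate of Stage 2 and the measure bound on $\mathscr{B}_i(h)$ delivers $(\ast)$, and hence $T_n^{\ast\ast} - T_n = o_P(1)$.
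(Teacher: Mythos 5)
Your overall strategy differs from the paper's: the paper computes the marginal density $\lambda^\ast(t) = E[\delta_i(t)\lambda(t)/\int_0^1\delta_i(v)\lambda(v)\,\mathrm{d}v]$ of the retained design points and uses \eqref{delta-condition1} and \eqref{delta-condition3} only to verify that $\lambda^\ast$ is bounded away from zero and infinity with a bounded derivative, so that the machinery of Theorem \ref{thm-kernel-test-consistent} applies verbatim; you instead run a per-subject interior/boundary decomposition of $\mathscr{I}_i$. Your route is legitimate in spirit and arguably confronts the per-subject boundary effects more directly, but as written it contains a quantitative gap that breaks the argument.

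The gap is a factor of $n$ in Stage 1. The squared-difference term in $T_n^{\ast\ast}-T_n$ is $\int_0^1 \| (\tilde{\mathbb{X}}^\top\tilde{\mathbb{X}})^{1/2}(\boldsymbol{\mathcal{I}}-\boldsymbol{\mathcal{L}})(\tilde{\boldsymbol\beta}^{\ast\ast}-\hat{\boldsymbol\beta}^w)(t)\|^2\,\mathrm{d}t$, and since $\tilde{\mathbb{X}}^\top\tilde{\mathbb{X}} \asymp n\Psi$ while $\|\tilde{\boldsymbol\beta}^{\ast\ast}(t)-\hat{\boldsymbol\beta}^w(t)\|^2 \lesssim n^{-1}\sum_i \delta_i(t)(\tilde Y_i^{\ast\ast}(t)-Y_i(t))^2$, the sufficient condition is $\sum_{i=1}^n\int_{\mathscr{I}_i}(\tilde Y_i^{\ast\ast}(t)-Y_i(t))^2\,\mathrm{d}t = o_P(1)$ --- exactly the normalization used in \eqref{thm-tn-decomp1} of the paper --- not your $(\ast)$, which carries an extra $n^{-1}$ and is a factor of $n$ too weak. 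This matters precisely in Stage 3: with your $O(1)$ boundary bias over a strip of measure $O(h)$, the correctly normalized boundary contribution is $O_P(nh)=O_P(n^{1-\theta/5})\to\infty$ for any admissible $\theta$. The fix is available from the hypotheses you already have: $\max_i\|Y_i'\|_\infty = O_P(1)$ makes the Nadaraya--Watson bias $O(h)$ even under asymmetric windows (this is how the paper handles the domain boundary in \eqref{eq-prop-mean}), so the per-subject boundary contribution is $O_P(h^2\cdot h)=O_P(h^3)$ and the total is $O_P(nh^3)=O_P(n^{1-3\theta/5})=o_P(1)$ for $\theta>5/3$. A secondary weakness is your claim that \eqref{delta-condition3} yields $E|\mathscr{B}_i(h)|=O(h)$ via a Kolmogorov-continuity argument: \eqref{delta-condition3} only controls pairwise probabilities $P(\delta_i(s)\neq\delta_i(t))$, which does not by itself bound the measure of the set of $t\in\mathscr{I}_i$ within distance $h$ of $[0,1]\setminus\mathscr{I}_i$ without additional path structure on $\delta_i$ (e.g., that $\mathscr{I}_i$ is a finite union of intervals, as in Remark \ref{rmk:delta}); the paper avoids this by working only with the averaged density $\lambda^\ast$, at the cost of not making the per-subject boundary analysis explicit.
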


\begin{proof}
We note that
\begin{align}
\begin{split}
    P(T_{i,m}^\ast \in A) 
    &= E\big[P(T_{i,j_m} \in A \,|\, \delta_i(T_{i,j_m}) = 1)\big] \\
    &= E\bigg[\frac{P(T_{i,j_m} \in A, \delta_i(T_{i,j_m}) = 1 \,|\, \delta_i)}{P(\delta_i(T_{i,j_m}) = 1 \,|\, \delta_i)}\bigg]\\
    &= E\bigg[\frac{\int_ A \delta_i(u) \lambda(u) \, \mathrm{d}u}{\int_0^1 \delta_i(v) \lambda(v) \, \mathrm{d}v }\bigg].
\end{split} \nonumber
\end{align}
The above observation implies that, even if discrete observations are sampled from the random segments of functional responses, the proposed method works for this case if we impose additional conditions on the filtering process $\delta_i$ so that the density of $T_{i,m}^\ast$ given by 
\[
    \lambda^\ast(t) = E\bigg[\frac{\delta_i(t) \lambda(t)}{\int_0^1 \delta_i(v) \lambda(v) \, \mathrm{d}v }\bigg] 
    \quad (t \in [0,1])
\]
satisfies the key design condition of Theorem \ref{thm:section3}.

For the boundedness of $\lambda^\ast$, we note that conditions (C1)-(C4) and the assumptions on $\lambda$ imply the uniform lower bound,
\[
\lambda^\ast(t) \ge E\bigg[\frac{\delta_i(t)\lambda(t)}{\Vert \lambda \Vert_\infty}\bigg] = \frac{b(t)\lambda(t)}{\Vert \lambda \Vert_\infty}
\ge \frac{b_0 \lambda_0}{\Vert \lambda\Vert_\infty} > 0,
\]
where $b_0 = \inf_t b(t)$ and $\lambda_0 = \inf_t \lambda(t)$.
Also, \eqref{delta-condition1} gives the uniform upper bound,
\begin{align}
\lambda^\ast(t) \le \frac{\Vert \lambda \Vert_\infty}{\lambda_0}
E\bigg[\frac{1}{\int_0^1 \delta_i(v) dv}\bigg]. \label{lambda-star-bounded}
\end{align}

For the smoothness of $\lambda^\ast$, we note that
\begin{align}
        \begin{split}
            \frac{\lambda^\ast(s) - \lambda^\ast(t)}{s-t}
            &= \frac{1}{s-t}E\bigg[\frac{\delta_i(s) \lambda(s) - \delta_i(t) \lambda(t)}{\int_0^1 \delta_i(v) \lambda(v) \, \mathrm{d}v }\bigg]\\
            &= \sum_{j=1}^3 E\bigg[\frac{A_{ij}(s,t)}{\int_0^1 \delta_i(v) \lambda(v) \, \mathrm{d}v }\bigg], 
        \end{split} \label{lambda-star-diff}
        \end{align}
where $A_{i1}(s,t) = \frac{\lambda(s) - \lambda(t)}{s-t} \, \mathbb{I}(s, t \in \mathscr{I}_i)$, $A_{i2}(s,t) = \frac{\lambda(s)}{s-t} \, \mathbb{I}(s\in \mathscr{I}_i, \, t \not\in \mathscr{I}_i)$, and $A_{i3}(s,t) = - \frac{\lambda(t)}{s-t} \, \mathbb{I}(s\not\in \mathscr{I}_i, \, t \in \mathscr{I}_i)$. Obviously, $|A_{i1}(s,t)|$ is bounded (a.s.) since $\lambda$ has a bounded derivative. The moment condition \eqref{delta-condition1} and the dominated convergence theorem give
\begin{align}
    \lim_{s \to t} E\bigg[ \frac{A_{1j}(s,t)}{\int_0^1 \delta_i(v) \lambda(v) \, \mathrm{d}v } \bigg] =  E\bigg[\frac{\delta_i(t)\lambda'(t) }{\int_0^1 \delta_i(v) \lambda(v) \, \mathrm{d}v } \bigg]. \label{lambda-star-derivative}
\end{align}

To analyze $A_{i2}(s,t)$, using H\"older's inequality with $p^{-1}+q^{-1}=1$ for $p, q >1$, we have
\[
E\Bigg\vert \frac{A_{i2}(s,t)}{\int_0^1 \delta_i(v)\lambda(v)dv}\Bigg\vert \le
\frac{\Vert\lambda\Vert_\infty} {\lambda_0} \left\{ E\bigg\vert \frac{1}{\int_0^1 \delta_i(v) dv} \bigg\vert^p\right\}^{1/p}
\left\{E\bigg\vert \frac{\mathbb{I}(s\in \mathscr{I}_i, \, t \not\in \mathscr{I}_i)}{s-t}\bigg\vert^q\right\}^{1/q}.
\]
Doing the same with $A_{i3}$, we claim that 
\begin{align}
\limsup_{s\to t} \frac{P(\delta_i(s) \neq \delta_i(t))}{\vert s-t \vert^q}
= 0 \label{delta-condition2}
\end{align}
Indeed, it follows from \eqref{delta-condition3} that
\begin{align}
\begin{split}
    \frac{P(\delta_i(s) \neq \delta_i(t))}{\vert s-t \vert^q}
    &\leq 2C |s-t|^{p-q},
\end{split} \nonumber 
\end{align}
provided that $p > 2 > \frac{p}{p-1}= q$. 
Therefore, combining \eqref{lambda-star-diff}, \eqref{lambda-star-derivative}, and \eqref{delta-condition2}, we conclude that the derivative of $\lambda^\ast$ is given by
\[
    (\lambda^\ast)'(t) = E\bigg[\frac{\delta_i(t)\lambda'(t) }{\int_0^1 \delta_i(v) \lambda(v) \, \mathrm{d}v } \bigg].
\]
The boundedness of $(\lambda^\ast)'$ can also be shown similarly as \eqref{lambda-star-bounded}.
\end{proof}

\begin{rmk} \label{rmk-condition}
    The condition \eqref{delta-condition3} can also equivalently understood as
    {\color{black}
    \begin{align}
        \begin{split}
            \vert \Gamma(s,t) - b(s)(1-b(t))\vert \leq C|s - t|^p
        \end{split} \label{delta-condition3-equiv}
    \end{align}
    as well as $\vert \Gamma(s,t) - b(t)(1-b(s))\vert \leq C|s - t|^p$, 
    where $\Gamma(s,t) = \mathrm{Cov}\big(\delta_i(s), \delta_i(t)\big)$ and $\Gamma(t,t)=b(t)(1-b(t))$}. To see this, we note that
    \begin{align}
\begin{split}
    \Gamma(s,t)
     &= \mathrm{Cov}\big(\delta_i(s), \delta_i(t)\big)\\
    &= E\big[ \delta_i(s) \delta_i(t) \big] - E\big[ \delta_i(s) \big]E\big[ \delta_i(t) \big]\\
    &= P\big(\delta_i(s) = 1,\, \delta_i(t) = 1\big) - b(s)b(t).
\end{split} \nonumber
\end{align}
Similarly, we have 
 \begin{align}
 \begin{split}
     \Gamma(s,t)
     &= \mathrm{Cov}\big(1-\delta_i(s), 1-\delta_i(t)\big)\\
     &= P\big(\delta_i(s) = 0,\, \delta_i(t) = 0\big) - (1-b(s))(1-b(t)).
 \end{split} \nonumber
 \end{align} 
It follows that
 \begin{align}
\begin{split}
     P\big( \delta_i(s) \neq \delta_i(t) \big)
     &= 1 - P\big(\delta_i(s)=0,\, \delta_i(t) = 0\big) - P\big(\delta_i(s)=1, \delta_i(t)=1\big)\\
     &= \big\{b(s)(1-b(t)) - \Gamma(s,t)\big\} + \big\{b(t)(1-b(s)) - \Gamma(s,t) \big\}.
\end{split} \nonumber 
\end{align}
Indeed, 
    \begin{align}
        \begin{split}
            b(s)(1-b(t)) - \Gamma(s,t)
            &= b(s)(1-b(t)) - E\big[ \delta(s)\delta(t) \big] + b(s)b(t) \\
            &= P\big( \delta_i(s) = 1 \big) - P\big( \delta_i(s) = 1,\, \delta_i(t) = 1 \big)\\
            &= P\big( \delta_i(s) = 1,\, \delta(t)_i = 0 \big).
        \end{split} \nonumber
    \end{align}
    Similarly, we have 
    \begin{align}
    \begin{split}
        b(t)(1-b(s)) - \Gamma(s,t) 
            &= P\big( \delta_i(t) = 1 \big) - P\big( \delta_i(s) = 1,\, \delta_i(t) = 1 \big)\\
            &= P\big( \delta_i(s) = 0,\, \delta_i(t) = 1 \big).
    \end{split} \nonumber
    \end{align}
    Therefore, \eqref{delta-condition3} and \eqref{delta-condition3-equiv} are equivalent because
    \begin{align}
    \begin{split}
        P(\delta_i(s) \neq \delta_i(t)) 
        &= b(s)(1-b(t))+b(t)(1-b(s))-2\Gamma(s,t). 
    \end{split}
    \end{align}
\end{rmk}

\begin{rmk}\label{rmk:delta}
    We provide one simple example of $\delta$ that satisfies the conditions \eqref{delta-condition1} and \eqref{delta-condition3}. 
    Suppose that $U_{(1)} < \cdots < U_{(2p+k+2)}$ be order statistics of a $\mathrm{Uniform}(0,1)$ random sample of size $(2p+k+2)$ for some $k \geq p = 3$. Let $\delta(t) = \mathbb{I}(U_{(p+1)} \leq t \leq U_{(p+k+2)})$. Since $S = U_{(p+k+2)} - U_{(p+1)}$ has a $\mathrm{Beta}(k+1, 2p+2)$ distribution, the condition \eqref{delta-condition1} holds, i.e, 
\[
    E\Bigg\vert \frac{1}{\int_0^1 \delta(v) dv} \Bigg\vert^p = E|1/S^p| = \frac{(k-p)! \, (k+2p+2)!}{k! \, (k+p+2)!
    } < \infty.
\]
To verify the condition \eqref{delta-condition3}, let $s < t$ without loss of generality. We note that
    \begin{align}
        \begin{split}
            P\big( \delta(s) = 1,\, \delta(t) = 0 \big) 
            &= P(U_{(p+1)} \leq s \leq  U_{p+k+1} < t)\\
            &= C_{p,k} \int_s^t \int_0^s u^3 (v-u)^k (1-v)^3 \, \mathrm{d}u \mathrm{d}v,
        \end{split} \nonumber
    \end{align}
    where $C_{p,k} = \frac{(2p+k+2)!}{p! \, k! \, p!}$. 
    For $g(s,t) = \int_s^t \int_0^s u^3 (v-u)^k (1-v)^3 \, \mathrm{d}u \mathrm{d}v$ satisfying $g(s,s) = 0$, the Leibniz rule and integration by parts give
    \begin{align}
        \begin{split}
            g^{(0,1)}(s,t) 
            = \frac{\partial}{\partial t} g(s,t)
            &= (1-t)^3 \int_0^s u^3 (t-u)^k \, \mathrm{d}u\\
            &= (1-t)^3 \sum_{\ell=0}^3 c_\ell s^{3-\ell} (t-s)^{k+ 1 + \ell}
        \end{split}
    \end{align}
    for some non-zero constants $c_0, \ldots, c_3$. Therefore, it follows from the mean value theorem that
    \begin{align}
        \begin{split}
            P\big( \delta(s) = 1,\, \delta(t) = 0 \big)
            &\leq C_{p,k}\big| g(s,t) - g(s,s) \big|\\
            &\leq C_{p,k} |s-t| \sup_{u \in [s,t]}\big| g^{(0,1)}(s,u) \big| \\
            &\leq C_{p,k}^\ast |s-t|^{k+2},
        \end{split}
    \end{align}
    where $C_{p,k}^\ast = C_{p,k} \max_{\ell} c_\ell$. The case for $P\big( \delta(s) = 0,\, \delta(t) = 1 \big) $ can also be verified similarly, and we get the condition \eqref{delta-condition3}.
\end{rmk}


\section{Simulation studies}\label{sec:sim}
In this section, we study the finite sample performance of the proposed testing procedure in terms of size control and powers under various settings. The performances under incomplete functional response models are compared to the benchmark performance, where functional responses are fully observed without measurement errors.

\subsection{Simulation setting}
We first generate the fully observed response $Y_i$ from the model
\begin{equation}\label{simY}
Y_i(t) = \mathbf{X}_i^\top \boldsymbol\beta(t) + \mathbf{Z}_i^\top \boldsymbol\alpha(t) + \epsilon_i(t), \quad (t \in [0,1])
\end{equation}
for $i=1,\ldots, n$, where covariates $\mathbf{X}_i = (1_{\{U_{i1} > 0\}}, \Phi(U_{i2}), U_{i3})^\top$ and $\mathbf{Z}_i= (1, U_{i4})^\top$ are from $\mathbf{U}_i \stackrel{i.i.d.}{\sim} N_4(\mathbf{0}, \Sigma)$ with $\Sigma = [\sigma_{ij}]_{1\leq i,j \leq 4}$ for $\sigma_{ij}= 0.5^{|i - j|}$, and $\Phi$ denoting the cdf of $N(0,1)$; functional coefficients $\boldsymbol\alpha(t) $ $=$ $\{\alpha_1(t),$ $\alpha_2(t) \}^\top$ associated with $\mathbf{Z}_i$ are generated by $\alpha_k(t) = \sum_{l=4}^{5} ( k+l)^{-1/2} (-1)^{l}v_l(t) \big/ \{\sum_{l=4}^{5} (k+l)^{-1}\}^{1/2}$ for $k=1, 2$, where $V(5) = \{ v_l(t);~t \in [0,1]\}_{l=1}^5$ is a set of orthonormal polynomial base derived from polynomials $P(5) = \{ t^{l-1}; ~t\in [0,1]\}_{l=1}^5$, that is, $\alpha_k \in \textrm{span}\{V(5)\}$ satisfying $\| \alpha_k \|_2 = 1$; random error is independently and identically generated from $\epsilon_i(t) = \sum_{m=1}^{100} e_m \phi_m(t)$, where $\phi_m(t) = \sqrt{2}\sin(2m\pi t)$ and $e_m \stackrel{i.i.d.}{\sim} N(0, 4 m^{-4})$, for $m=1, \ldots, 100$. Functional trajectories are generated at a regular grid of 100 points in $[0, 1]$ and the sample size $n$ is chosen to be 100 and 200.

Let $\vbeta_0(t)=\{\beta_{0,1}(t),\beta_{0,2}(t),\beta_{0,3}(t) \}^\top$, where $\beta_{0,j}(t) = \{v_1(t) + v_{j+1}(t)\}/\sqrt{2}$, for $j=1,2,3$, implying that $\beta_{0,j} \in \textrm{span}\{V(4)\}$ and $\|\beta_{0,j}\|_2=1$. We then consider two scenarios A and B on $\vbeta(t)$. In scenario A, we set $\vbeta(t) = \vbeta_0(t) + n^{-\tau/2} \{d_A  \vdelta_A(t)\}$, where $d_A>0$ and  $\vdelta_A(t)=\{\delta_{A,1}(t),\delta_{A,2}(t), \delta_{A,3}(t)\}^T$ with $\delta_{A,j}(t) = \sum_{m=1}^{100} (j+m)^{-1/2} \phi_m(t) \big / \{\sum_{m=1}^{100} (j+m)^{-1}\}^{1/2}$. And we consider a hypothesis testing for the null hypothesis
\begin{equation}\label{hyp:sim}
H_0: \beta_j \in \textrm{span}\{V(4)\}, \quad \forall j=1,2,3.
\end{equation}
It aims to find statistical evidence on whether $\beta_j(t)$ coefficients can be expressed exclusively by polynomials up to order three. We investigate the empirical size and power of the proposed method under different magnitudes of the null-deviated signals by setting $d_A=0,1,3,5,7,9$. For each $d_A$, we further set $\tau=1, 0.8, 0.67$, corresponding to the rates of the local alternative approaching to the null as $n^{1/2}$, $n^{1/2.5}$, $n^{1/3}$, respectively,  to examine the performance under different rates that the null-deviated model tends to the null model.
In scenario B, we consider a test for the same hypothesis of  \eqref{hyp:sim} under $\vbeta(t) = \vbeta_0(t) + n^{-\tau/2} \{d_B \vdelta_B(t)\}$, where $\vdelta_B(t) = \{\delta_{B,1}(t),\delta_{B,2}(t), \delta_{B,3}(t)\}^T$ with $\delta_{B,j}(t) =v_5(t)$. we set $d_B=0, 0.3, 0.6, 0.9, 1.2, 1.5$, and $\tau=1, 0.8, 0.67$. Figure \ref{fig:simalt} illustrates deviations of $\vbeta(t)$ from $\vbeta_0(t)$ under two scenarios for $d_A=3$ and $d_B=0.6$, respectively, when $\tau=1$ and $n=100$.

\begin{figure}[t!]
  \centering
  \includegraphics[width=4.8in]{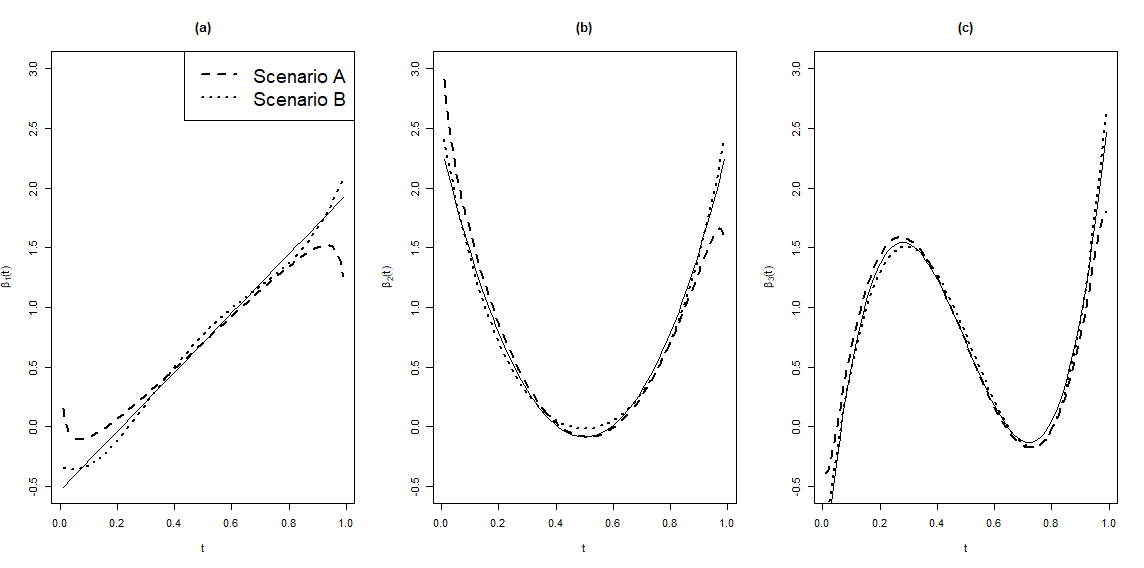}
  \caption{Regression coefficients under scenario A, $\beta_j(t) = \beta_{0,j}(t) + n^{-1/2} \{d_A\delta_{A,j}(t)\}$, and under scenario B, $\beta_j(t) = \beta_{0,j}(t) + n^{-1/2} \{d_B\delta_{B,j}(t)\}$, for (a) $j=1$, (b) $j=2$, and (c) $j=3$, under $n=100$, $d_A=3$, and $d_B=0.6$. The straight lines in each plot represent $\beta_{0,j}(t)$, $j=1,2,3$, respectively.}
  \label{fig:simalt}
\end{figure}
{\color{black}
For each scenario, we apply three incomplete sampling schemes. First, we consider the partially observed functional responses with the random missing period $M_i$, on which functional values on the $i$th trajectory are removed. By following a part of the setting in Remark \ref{rmk:delta}, we generate $M_i = [U_{(p+1)}, U_{(p+k+2)}]$, where $U_{(1)} < \cdots < U_{(2p+k+2)}$ are order statistics of independent random samples of a size $(2p+k+2)$ from Uniform(0,1). We note that $1-\delta(t)$ in Remark \ref{rmk:delta} is set as our indicator process, where employing reversed indicator process does not affect the remarked conclusion. We here set constant parameters $p, k$, as $p=k=3$.  On average, for each simulation set, 30.4 \% of each trajectory is removed by missing interval $M_i$. Second, we consider functional responses irregularly collected over 80 asynchronous grid points with i.i.d. measurement errors following $N(0, 0.5^2$) added to each $Y_i(T_{i,m})$, $m=1, \ldots, 80$. The locations of 80 grid points are uniformly sampled among 100 grids from each observation. Lastly, we consider the partially observed noisy responses collected over irregular grids under the setting in Remark \ref{rmk:delta} with the reserved indicator process specified above. That is, $M_i = [U_{(p+1)}, U_{(p+k+2)}]$, $N_i=60$, and i.i.d. additive measurement errors generated from $N(0, 0.5^2$). Here, locations of 60 grid points are uniformly sampled among available grids on partially sampled trajectories, if there are more than 60 grids on the filtered set. Figure \ref{fig:Y_example} (a) illustrates a randomly selected set of fully observed response trajectories, and three other sets of trajectories in Figure \ref{fig:Y_example} (b),  (c), (d) display partially observed response trajectories filtered by missing random intervals, noisy responses generated over irregular grid points with additive measurement errors, and noisy partially observed responses over irregular grids with additive measurement errors, respectively.
}

\begin{figure}[t!]
  \centering
  \includegraphics[width=4.8in]{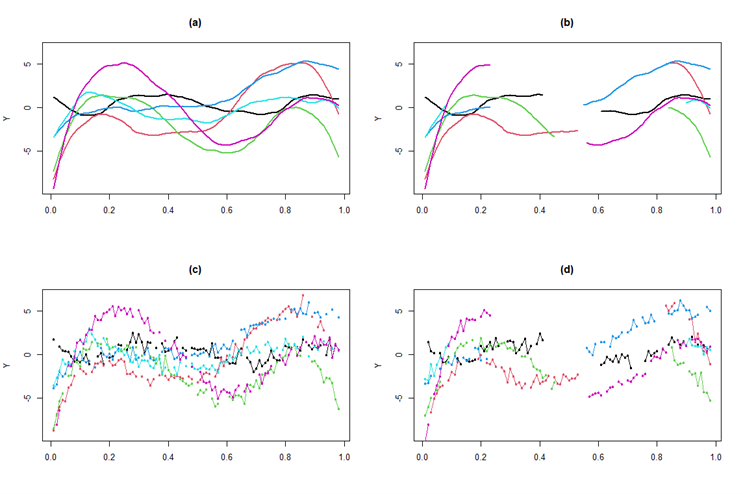}
  \caption{Randomly selected six simulated trajectories of (a) fully observed response data, (b) partially observed response data filtered by independent missing intervals, (c) irregularly observed data with added measurement, and {\color{black} (d) partially observed data over irregular grids with measureme errors.}}
  \label{fig:Y_example}
\end{figure}

\subsection{Empirical size and power}
{\color{black}
We examine the empirical sizes and powers of the proposed procedures for models from fully, partially, irregular, and partially irregular error-prone functional response data using their corresponding test statistics, denoted as $T_n^{\textrm{Full}}$, $T_n$, $T_n^\ast$, and $T_n^{\ast\ast}$ respectively.} Practical implementation steps for each test statistic are provided in the Appendix. All simulation results below were based on 5,000 simulation replicates, and the critical value of the test was estimated by 5,000 bootstrap samples in each simulation run. To calculate the test statistic $T_n^\ast$ and $T_n^{\ast\ast}$ involving kernel smoothing, we chose a common bandwidth that minimizes the leave-one-out cross-validation \citep{wong1983consistency, hardle1985optimal} across all subjects in each simulation sample.
\begin{sidewaystable}
\caption{Empirical size and power at the $5\%$ nominal level for testing $H_0: \beta_j(t) \in \mbox{span}\{ V(4) \}$ under scenario A from fully observed response data ($T_n^{\textrm{Full}}$), partially observed response data ($T_n$), irregularly observed response data with additive measurement errors ($T_n^\ast$), and irregularly observed partial response data with additive measurement errors ($T_n^{\ast\ast}$). 
}
\centering
\vspace{2mm}
\begin{tabular} {cccccccccccccc} 
\hline
 $d_A$ & $n$  & \multicolumn{4}{c}{$\tau=1$} & \multicolumn{4}{c}{$\tau=0.8$} & \multicolumn{4}{c}{$\tau=0.67$}\\
 \cmidrule(lr){3-6} \cmidrule(lr){7-10} \cmidrule(lr){11-14}\\ [-0.1in]
 &  & $T_n^{\textrm{Full}}$ & $T_n$ & $T_n^\ast$ & $T_n^{\ast\ast}$ & $T_n^{\textrm{Full}}$ & $T_n$ & $T_n^\ast$ & $T_n^{\ast\ast}$ & $T_n^{\textrm{Full}}$ & $T_n$ & $T_n^\ast$ & $T_n^{\ast\ast}$ \\
  \hline
\multirow{2}{*}{0} 
    & 100 & 0.060 & 0.068 & 0.061 & 0.072 & 0.065 & 0.060 & 0.064 & 0.072 & 0.061 &  0.050  & 0.066 & 0.071 \\
    & 200 & 0.055 & 0.052 & 0.057 & 0.073 & 0.053 & 0.057 & 0.057 & 0.073 & 0.050 & 0.062  & 0.050 & 0.071 \\
 \hline
\multirow{2}{*}{1} 
    & 100 & 0.075 & 0.067 & 0.067 & 0.070 & 0.082 & 0.81 &  0.076 &  0.074 & 0.107 & 0.085 & 0.096 & 0.078 \\
    & 200 & 0.065 & 0.065 & 0.063 & 0.084 & 0.081 & 0.085 & 0.075 & 0.078  & 0.112 & 0.103 & 0.094 & 0.077\\
 \hline
 \multirow{2}{*}{3} 
    & 100 & 0.153 & 0.101 & 0.118 & 0.086 & 0.354 & 0.288 & 0.21 0 & 0.095 &  0.679 & 0.529 & 0.480 & 0.181 \\
    & 200 & 0.140 & 0.124 & 0.098 & 0.101 & 0.404 & 0.314 & 0.222 & 0.124 &  0.815 & 0.576 & 0.597 & 0.255\\
 \hline
 \multirow{2}{*}{5} 
    & 100 & 0.384 & 0.230 & 0.223 & 0.108 & 0.900 &  0.670 & 0.555 & 0.263 & 1.000 & 0.913 & 0.881 & 0.492\\
    & 200 & 0.398 & 0.258 & 0.207 & 0.111 & 0.958 &  0.752 & 0.634 & 0.349 & 1.000 & 0.996 & 0.900 & 0.608 \\
 \hline
 \multirow{2}{*}{7} 
    & 100 & 0.789 & 0.474 & 0.443 & 0.152 & 0.999 & 0.951 & 0.902 & 0.510 & 1.000 & 1.000 & 0.999 & 0.916 \\
    & 200 & 0.775 & 0.504 & 0.417 & 0.172 & 1.000 & 0.996 & 0.956 & 0.564 & 1.000 & 1.000 & 1.000 & 1.000\\
 \hline
  \multirow{2}{*}{9} 
    & 100 & 0.977 & 0.809  & 0.700 & 0.458 & 1.000  & 1.000 & 0.995 & 0.808 & 1.000 & 1.000 & 1.000 & 0.999 \\
    & 200 & 0.985 & 0.833 & 0.728 & 0.480 & 1.000  & 1.000 & 0.999 & 0.964 & 1.000 & 1.000 & 1.000 & 1.000\\
 \hline
 \label{table:simA}
\end{tabular}
\end{sidewaystable}

Table \ref{table:simA} summarizes results for hypothesis \eqref{hyp:sim} at 5\% nominal level under scenario A from test statistics from corresponding functional response data structures, for $\tau= 1, 0.8, 0.67$. It can be seen that the empirical sizes are reasonably controlled around the nominal level 0.05. Although the sizes under error-prone partially observed structure, corresponding to the test statistic $T_n^{\ast \ast}$, show slightly larger values around 0.07, and it is due to loss of original information with missing intervals and additive noise. 
In terms of power, we investigate the results depending on $\tau$, which regulates the rate that the null-deviated model approaches the null model. As expected, the empirical power increases as $\tau$ decreases or as $d_A$ increases. In addition, the power reasonably approaches to 1 under all settings. Especially for $\tau=1$ of $T_n^{\textrm{Full}}$, the power approaches 1 even with moderate magnitudes of the null-deviated signals, indicating that the condition of $\sum_{m=1}^\infty \pi_m^2 = \infty$ in Theorem \ref{thm-power} is not restrictive in practical application. The relatively deflated powers from $T_n^\ast$ might be due to some loss of the null-deviated signal after applying the smoothing process to noisy data. We observe that the power from $T_n$ goes to 1 with a reasonable but slightly slower rate than $T_n^{\textrm{FULL}}$ shows, and it is from the smaller effective sample sizes at each grid due to partial sampling. Although the lowest powers are observed from $T_n^{\ast \ast}$ under all settings due to most significant loss of original information with missing periods and noisy discretized measurements, we still see the power gradually increases towards 1. {\color{black}Indeed, our extra simulations considering larger values of $d_A$ show that powers under $\tau=1$ from $T_n^\ast$ and $T_n^{\ast\ast}$ become 1 when $d_A=13$ and 17, respectively. }



The simulation results from scenario B are illustrated in Figure \ref{fig:simB}. The results under $n=100$ and $n=200$ are represented by full and dotted lines, respectively. We observe a very similar pattern to that under scenario A with reasonable size controlling at the 0.05 nominal levels and with the behaviors of the power for $d_B>0$. We again confirm that power approaches to 1 when $\tau=1$ under the moderate magnitudes of the null-deviated signals. The power tends to 1 with relatively slower but reasonable rates with an increase of $d_B$ for $T_n$ and $T_n^{\ast}$ due to the same reasons described in results from scenario A. {\color{black} Again, we observe the lowest powers achieved from $T_n^{\ast \ast}$ under $d_B >0$, they gradually approaches towards 1. We note that extra simulations considering larger values of $d_B$ show that powers under $\tau=1$ for $T^\ast$ and $T^{\ast\ast}$ are attained as 1 when $d_B=2.1$ and 3, respectively. It implies empirically consistent properties of our proposed tests.}



{
Although we have only illustrated the simulation result for investigating the finite sample performance of $T_n^\ast$ with $N_i=80$ in Table \ref{table:simA} and Figure \ref{fig:simB}, we observed that the power and size of the proposed test are also well achieved with $N_i=60$, where relatively rich response information is available over the domain. However, under the sparse setting, $N_i=10$ or $30$, we observed relatively unsatisfactory results with the finite sample analysis. We note that, in our simulation setting, the null-deviated signals visualized in Figure 1 are quite subtle, with a delicate difference in the trend and visually detectable discrepancies only at boundaries. Hence, we report a limitation of the proposed method for $T_n^\ast$ such that the estimated regression coefficients calculated from noisy functional responses collected over sparse grids may not be able to effectively detect subtle trend differences near the boundary.
}

\begin{figure}[t!]
  \centering
  \includegraphics[width=4.8in]{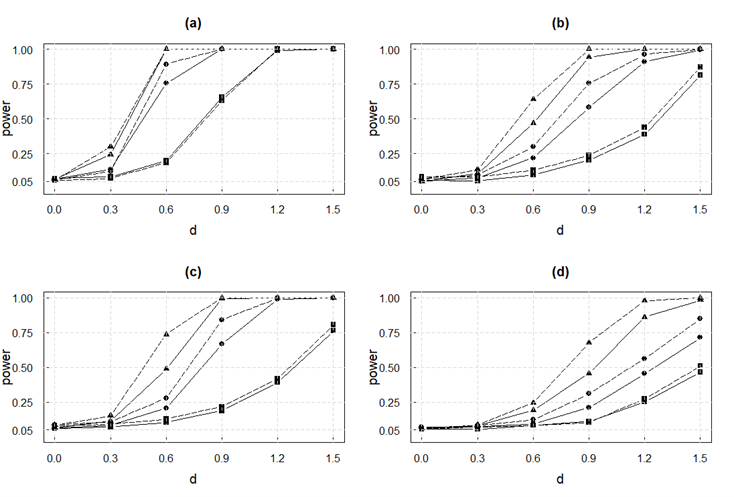}
  \caption{ Empirical size and power at the $5\%$ nominal level for testing $H_0: \beta_j(t) \in \mbox{span}\{ V(4) \}$ under scenario B for (a) fully observed response data, (b) partially observed response data, (c) irregularly observed functional data with additive measurement errors, {\color{black} and (d) irregularly observed partial functional data with additive measurement errors} ($\blacksquare,\tau=1$; $\CIRCLE,\tau=0.8$; $\blacktriangle,\tau=0.67$; $\rule[0.5ex]{1cm}{0.8pt}$, $n=100$; \hdashrule[0.5ex]{1cm}{1pt}{1pt}, $n=200$).} \label{fig:simB}
\end{figure}

\section{Real data application} \label{sec:real-data}

\subsection{The obesity prevalence trend change} \label{subsec:data1}
We illustrate the practical application of the proposed testing procedure through an analysis of the U.S. overweight and obesity prevalence data from 2011 to 2020. It is a part of the data of the U.S. residents regarding their health-related risk behaviors and chronic health conditions, collected by Behavioral Risk Factors Surveillance System (BRFSS) through the state-based telephone interview survey in cooperation with the Centers for Disease Control and Prevention (CDC). The dataset consists of percentages ($\%$) of adults aged 20 and over populations with the weight status of obese, overweight, normal weight, and underweight from 50 states. Along with weight status, socioeconomic status is also measured through educational and income levels of samples. In terms of income, each survey sample is classified into one of five categories; less than \$15,000, \$15,000-\$24,999, \$25,000-\$34,999, \$35,000-\$49,999, and over \$50,000. The full dataset can be found at: https://chronicdata.cdc.gov/Behavioral-Risk-Factors/Behavioral-Risk-Factor-Surveillance-System-BRFSS-P/dttw-5yxu. 
\begin{figure}[t!]
  \centering
  \includegraphics[width=4.8in]{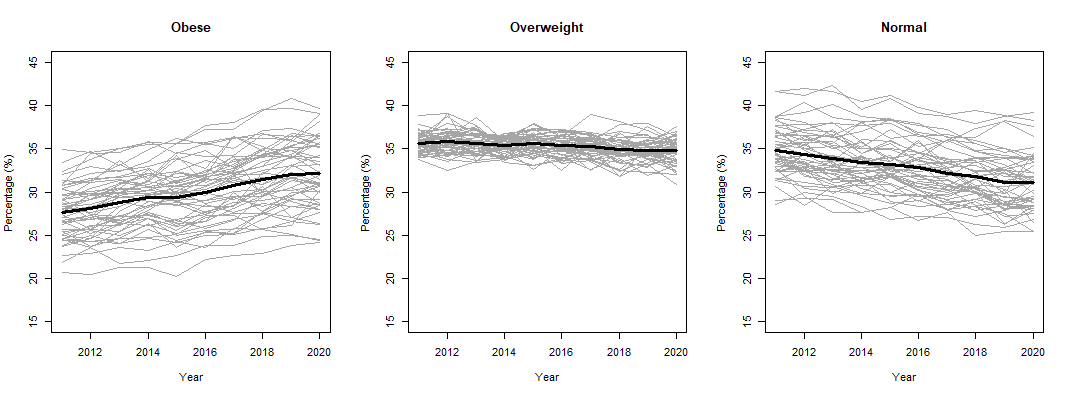}
  \caption{Percentages (\%) of (a) obese, (b) overweight, and (c) normal weight adults among the U.S. adults aged 20 and over populations from 2011 to 2020, from 50 states  (gray lines) and sample means (solid lines) }
  \label{fig:obese}
\end{figure}

Despite growing recognition of the problem, the obesity epidemic continues in the U.S. with steadily rising obesity rates. For example, 1999-2000 through 2017-2018, U.S. obesity prevalence increased from 30.5\% to 42.4\%. Figure \ref{fig:obese} illustrates such trends during recent 10 years from 50 states for obese, overweight, and normal weight groups. The bold lines represent the sample mean trajectories of each group, where its calculation is specified later with the model specification \eqref{intmodel}. With the rising obesity rates, we observe decreasing proportions of normal weight population along with seemingly constant rates of overweight population. We apply the proposed methods to identify shape of the tendency on prevalence rates for each group of the weight status. Furthermore, the gap of obesity prevalence between low and high income groups changes during this time is also examined.


We first investigate the shape of overall prevalence trend for each weight group. Since data is collected over regular grids for all states with a few missing values, we adopt the test statistic $T_n$ for partially observed functional data. Let $Y_i(t_m)$ denote the observed prevalence rate for given weight status group from $i$th state. We formulate the intercept-only model with $\vZ = 0$ in \eqref{fullmodel} and rescaled discrete time points $2011, \ldots, 2020$ to equally spaced $t_m \in [0,1]$,
\begin{equation} \label{intmodel}
    Y_i(t_m) =  \beta(t_m) + \epsilon(t_m),\quad i=1,\ldots,50, \quad m = 1,\ldots, 10.
\end{equation}
Based on it, we obtain the least square estimates $\hat\beta(t_m)= \sum_{i=1}^{50} {Y_i(t_m)}/50$ as the sample trajectories of each weight group, illustrated with bold lines in Figure \ref{fig:obese}. To identify its shape, we consider the null hypotheses for the constant and linear spaces, corresponding to $H_{0,c}: \beta(t) \in \textrm{span}\{V(1)\}$ and $H_{0,l}: \beta(t) \in \textrm{span}\{V(2)\}$, respectively. Here, $V(r) = \{ v_l(t);~t \in [0,1]\}_{l=1}^r$ is an orthonormal set we obtain by applying the Gram-Schmidt process to the polynomial basis $P(r) = \{ t^{l-1}: t\in [0,1]\}_{l=1}^r$, for $r \geq 1$. Table \ref{table:obesity} shows calculated test statistic $T_n$ and corresponding calculated $p$-values for each null hypothesis from each weight group. Calculation details and numerical implementation steps are provided in the Appendix. In Table \ref{table:obesity}, we reject the null hypothesis of constant space $H_{0,c}$ for the obese and normal groups, but not $H_{0,l}$. That is, at a significance level less than $0.001$, obesity prevalence has linearly risen over time, while rates of normal weight population is linearly decreased. On the other hand, we could not find any significant trend as we retain the constant shape hypothesis $H_{0,c}$ at level $0.1$ for the null hypothesis $H_{0,c}$. 

\begin{table}[!t]
\centering
\caption{Calculated test statistic $T_n$ and $p$-values (in parentheses) for null hypotheses of constant and linear trends from each group of weight status }
\vspace{2mm}
\begin{tabular}{cccc}
  \hline
  & Obesity & Overweight & Normal weight \\
  \hline
$H_{0,c}: \beta(t) \in \textrm{span}\{V(1)\}$  & 23.13 ($<0.001) $ & 1.16 (0.162) &  14.74 ($<0.001$) \\
$H_{0,l}: \beta(t) \in \textrm{span}\{V(2)\}$ & 0.34 (0.822) & 0.16 (0.974) & 0.17 (0.967)\\
  \hline
  \label{table:obesity}
\end{tabular}
\end{table}

We next investigate the obesity prevalence over time associated with income levels. In recent literature, statistical analyses on the association between income levels and obesity rates have repeatedly reported that obesity prevalence has been significantly increased at a faster rate mostly in relatively low-income levels \citep{ Cynthia2010, Bently2018, Kim2018}. Figure \ref{fig:obese_income} (a) illustrates obesity prevalence rates for five income levels and their mean trajectories. While all five income levels present increasing obesity prevalence over time, the group of income less than $\$15,000$ shows the highest rates while the groups of income over $\$50,000$ illustrates the lowest rates. We first apply the functional ANOVA to this data, a special case of our proposed testing procedures corresponding to a part of \cite{Zhang2014}. To do this, we formulate the model based on \eqref{fullmodel-re} by setting $(250 \times 4)$ matrix for $\vX = \text{diag}\{\vone_{50}, \ldots, \vone_{50} \}$ and $(250 \times 1)$ vector of $1$'s for $\vZ$. The null hypothesis for fANOVA corresponds to \eqref{gnull}, where $V=\{ 0\}$; i.e., $H_0: \beta_j(t) = 0$, for $t \in [0,1]$, and $j=1,\ldots,4$. By applying the proposed testing procedure, we obtain $p$-value $<0.001$ and conclude that significant differences on obesity rates among different income groups exist. We then apply a type of post hoc test, specifically to examine how the gap of prevalence among lowest highest income group changes over time. Figure \ref{fig:obese_income} (b) shows mean trajectory of gap in obesity rates between the lowest and highest income groups. It is observed that this gap tends to decrease over time and fitted linear and quadratic trending lines are illustrated, respectively. We also consider the fit with the piecewise linear bases, where its fitting details and testing results are specified later.  We apply the proposed procedure based on the test statistic $T_n$ to identify the shape of this gap. Let $\vY(t_m) = \{ \vY_{level1}(t_m)^\top, \vY_{level5}(t_m)^\top \}^\top$, where $\vY_{level1}(t_m)$ is a vector of length $50$ with the elements of obesity rates for the lowest income group from 50 states at $m$th year. Similarly, $\vY_{level5}(t_m)$ denotes a vector for the highest income group. We then specify the model based on \eqref{fullmodel-re} with $\vX=(\vone_{50}, \vzero_{50})^\top$ and $\vZ$ the length 100 vector of $1$'s. Under given model formulation, $\beta(t)$ represents the difference between two groups means. Then two null hypotheses of linear and quadratic functional spaces are considered, $H_{0,l}: \beta(t) \in \textrm{span}\{V(2)\}$ and $H_{0,q}: \beta(t) \in \textrm{span}\{V(3)\}$, respectively.
\begin{figure}[t!]
  \centering
  \includegraphics[width=4.8in]{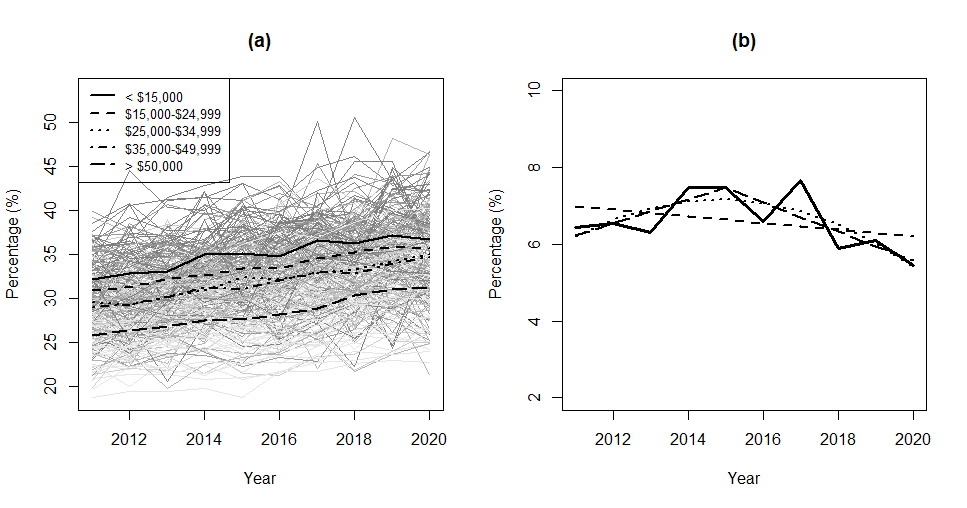}
  \caption{(a) Percentages (\%) of obese prevalence by five income levels from 50 states and (b) mean of differences of the obesity rates between group of income less than $\$15,000$ and group of income over $\$50,000$, with fitted lines using linear bases ($\hdashrule[0.5ex]{1.5cm}{1pt}{1.5mm 2pt}$), using quadratic bases ($\hdashrule[0.5ex]{1cm}{0.5pt}{1.5pt}$) and piecewise linear bases ($\hdashrule[0.5ex]{2cm}{1pt}{3.5mm 2pt})$.}
  \label{fig:obese_income}
\end{figure}
By applying the proposed testing procedures; for the test under $H_{0,l}$, we obtain $T_n = 10.75$ with the $p$-value $0.02$; and for the test under $H_{0,q}$, $T_n = 5.03$ and $p$-value is $0.23$. Under significance level $\alpha = 0.05$, we fail to reject the quadratic null space and conclude that the gap of obesity prevalence between lowest and highest income groups is significantly decreasing with the quadratic shape. To demonstrate the further application of our method with null hypothesis with other types of bases, we try the hypothesis testing for $H_{0,pl}: \beta(t) \in \textrm{span}\{U(3)\}$, where $U(3)$ represent a set of three orthonormal B-spline bases derived from the piecewise linear functions with knots at $0, 0.5,$ and 1, where the internal knot 0.5 is chosen by the estimated peak from the quadratic fit. Under this null hypothesis, we obtain $T_n = 4.70$ and $p$-value 0.28. By comparing obtained p-value 0.28 with  the $p$-value $0.23$ derived from the null hypothesis $H_{0,q}$, we observe slightly stronger statistical evidence on the conclusion for the piecewise linear shape on gap between two groups under given sample sizes. We note that results from smoothed trajectories through the test statistic $T_n^*$ leads the same inferential conclusions for $H_{0,l}$, $H_{0,q}$, and $H_{0,pl}$, although they are not presented here. It empirically demonstrates the performance of our proposed method in detecting significant functional shape even under non-smoothed raw trajectories.

\subsection{Human motion analysis in ergonomics}

We illustrate another data example in automotive ergonomics, previously analyzed by \cite{faraway1997regression}, \cite{shen2004f}, \cite{zhang2011statistical}, \cite{chen2020model}, and among others. The Center for Ergonomics at the University of Michigan collected data on body motions of an automobile driver. As part of the project, the right elbow angles of the test driver were captured as time-varying responses when the driver's hand leaves the steering wheel until reaching 20 different locations in the car. There were 3 repeated reaches to each of the different targets located near the glove compartment, headliner, radio panel, and gear shifter. 

{\color{black}  We associate observed discrete trajectories of elbow angles $R_{ij}(t_{ij,m})$ with with the $(x,y,z)$-coordinate of a reaching target with extra variables as
\begin{equation}
\begin{aligned}
    R_{ij}(t_{ij,m})
    &= \mu_0(t) 
    + \sum_{k=1}^3 \alpha_k(t_{ij,m}) d_{ik}  
    + \sum_{l=1}^3 \beta_l(t_{ij,m}) c_{il} \\
    &\qquad + \sum_{k=1}^3\sum_{l=k}^3 \gamma_{kl}(t_{ij,m}) c_{ik} c_{il}
    + \varepsilon_{ij}(t_{ij,m}),
\end{aligned} \label{faraway-reg-model}
\end{equation}
for $i=1, \ldots, 20$, $j=1,2,3$, and $k=1, \ldots, N_i$, }
where $(c_{i1},c_{i2},c_{i3})$ represents the $(x,y,z)$-coordinate of a target location with its origin at the initial hand posture on the steering wheel and  $d_{ik}$'s are $0$-$1$ dummy variables indicating four nominal areas of different targets. Specifically, $d_{i1} = 1$ if the target is located near the headliner, $d_{i2} = 1$ if the radio, $d_{i3} = 1$ if the gear shifter, and zeros otherwise so that we set the glove compartment for the baseline location. By adding the nominal target information to the conventional model, we are able to statistically compare the changes of elbow angles from different experimental conditions. Among $60$ experiments, we drop one trial which has been excluded in the literature, where the researchers revealed that the driver's motion was mistaken while reaching the target. See \cite{faraway1997regression} for more details about the experimental settings.

Since observed discrete trajectories of elbow angles reveal some noises due to the measurement errors, \cite{faraway1997regression} applied the smoothing splines to raw data to respect the smoothness of human motion and obtained pre-smoothed angle random curve. We denote it as $\tilde{R}_{ij}^\ast(t)$, where the tracking time points $t_{ij,1}, \ldots, t_{ij,N_{ij}}$ were re-scaled to $[0,1]$ for each of 60 reaches. The pre-smoothed random sample $\{ \tilde{R}_{ij}^\ast: i=1, \ldots, 20, \, j=1,2,3\}$ can be analyzed by the standard one-way functional ANOVA. 
 We note that the model \eqref{faraway-reg-model} turns out to be adequate for the data as the bootstrap-based test \citep{faraway1997regression} does not reject the lack of fit compared with the functional ANOVA model ($\textrm{p-value}=0.436$).  \cite{chaffin2002simulating, chaffin2005improving} also considered similar approaches to the driver's motion prediction in a larger dataset by adding extra variables to statistically control different experimental conditions. 

 In this example, we aim to analyze the shape of the time-varying motion changes rather than find a predictive model for an arbitrary target location. 
 {\color{black}
Based on the asymptotic equivalence between splines and certain class of kernel estimates \citep{Silverman1984, Lin2004}, we apply the proposed method to the pre-smoothed random sample $\{ \tilde{R}_{ij}^\ast: i=1, \ldots, 20, \, j=1,2,3\}$ to test the null hypothesis $H_{0}^\alpha:$ $\{ \alpha_1, \alpha_2, \alpha_3 \}$  $\in$ $\mathrm{span}\{V(2)\}$ with the same $V(2)$ defined in Section \ref{subsec:data1}. We perform inference using $T_n^\ast$ from Section \ref{sec:kernel} and find that the null hypothesis cannot be rejected ($T_n^\ast=3.16$, \,$\textrm{p-value}=0.660$).} This result together with Figure \ref{fig1:faraway} shows that, compared to the glove reaching experiment, the driver stretched their elbow less and moved slower at a constant relative angular velocity when reaching different area. We also individually test several hypotheses such as $H_{0}^\beta: \{ \beta_1, \beta_2, \beta_3 \} \in \mathrm{span}\{ V(2) \}$ ($T_n^\ast=3.88$, \,$\textrm{p-value}=0.930$), $H_{0}^{\gamma_{kk}}: \{ \gamma_{11}, \gamma_{22}, \gamma_{33} \} \in \mathrm{span}\{ V(2) \}$ ($T_n^\ast=2.86$, \,$\textrm{p-value}=0.710$), and $H_{0}^{\gamma_{kl}}: \{ \gamma_{12}, \gamma_{13}, \gamma_{23} \} \in \mathrm{span}\{ V(2) \}$ ($T_n^\ast=4.68$, \,$\textrm{p-value}=0.409$). We close this section by reporting that all twelve hypotheses we have tested were still not rejected after applying the multiple comparison adjustment, both the Bonferroni and Benjamini-Hochberg corrections, at $5\%$ significance level, implying statistically significant linear trends on them. 

\begin{figure}[!t]
  \centering
  \includegraphics[width=1.55in]{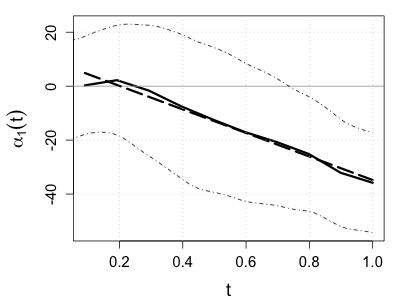}
  \includegraphics[width=1.55in]{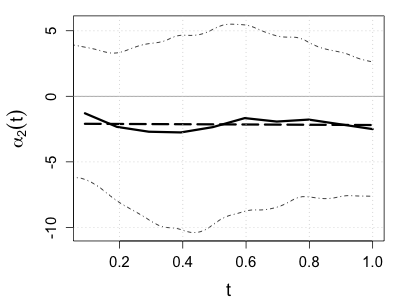}
  \includegraphics[width=1.55in]{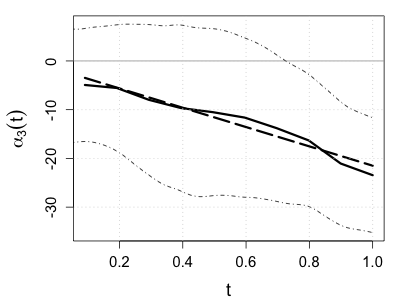}\\
  \includegraphics[width=1.55in]{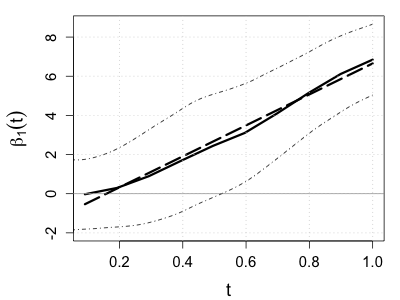}
  \includegraphics[width=1.55in]{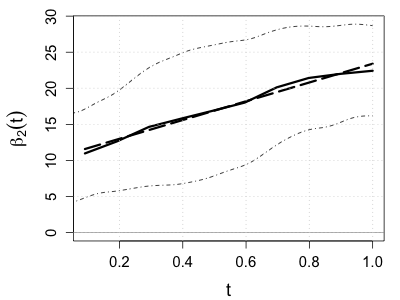}
  \includegraphics[width=1.55in]{faraway-example-coef-x.png}\\
  \includegraphics[width=1.55in]{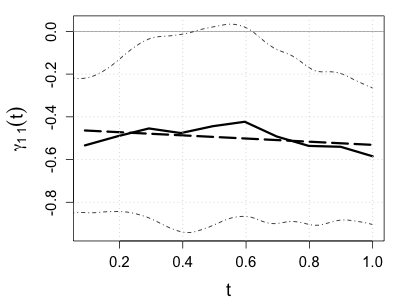}
  \includegraphics[width=1.55in]{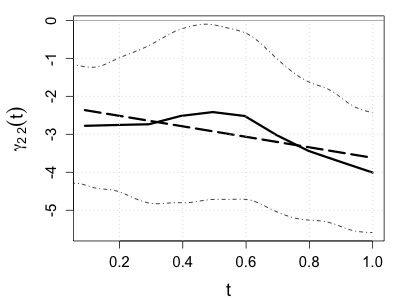}
  \includegraphics[width=1.55in]{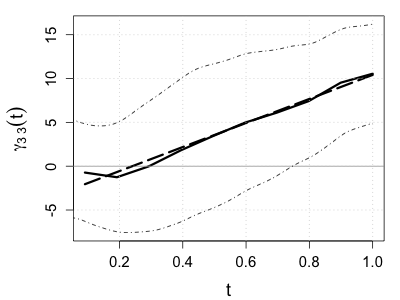}\\
  \includegraphics[width=1.55in]{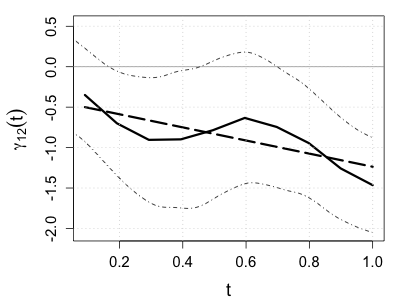}
  \includegraphics[width=1.55in]{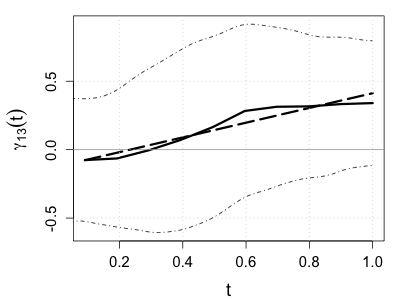}
  \includegraphics[width=1.55in]{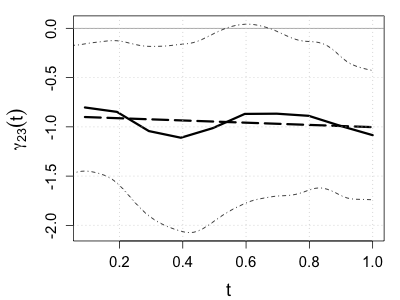}\\
  \caption{The regression coefficient estimates of the model \eqref{faraway-reg-model} are depicted. The solid and dot-dashed lines are coefficient estimates and their $95\%$ confidence bands, respectively. The long-dashed lines show the estimates under the null hypotheses $H_{0,l}$.}
  \label{fig1:faraway}
\end{figure}

\section{Discussion} \label{sec:discussion}
We have presented a statistical procedure for testing shape-constrained hypotheses on regression coefficients in function-on-scalar regression models, generalizing existing methods such as fANOVA that consider nullity hypotheses only. The approach presented here enables inferences about temporal/spatially varying coefficient effects as well. The large sample properties of the proposed test were investigated by deriving the asymptotic null distribution of the test statistic and consistency of the test against local alternatives. The methodology was demonstrated under three incomplete sampling situations; (i) partially observed, (ii) irregularly observed error-prone, and (iii) composition of former two incomplete functional response data. A few studies have recently illustrated goodness-of-fit tests for functional linear models under fully observed responses, but handling incomplete sampling designs was not studied either in theory or practice. Furthermore, the critical value in our methodology can be approximated with the spectral decomposition of covariance function, for which one can easily exploit the existing methods in the recent developments in functional data analysis. A key aspect of the methodology developed here is the specification of a relevant shape hypothesis of interest. Ideally, the application defines the relevant shape space. Otherwise, we can use standard curve-fitting hypotheses defined by, for example, polynomial basis functions, exponential functions, or periodic functions cycling at different frequencies.

In Section \ref{sec:kernel}, we considered functional data, where each sample path is observed on randomly spaced discrete points of size $N_i$. Assuming that $N_i$'s are increasing as the sample size increase, which is often called ``densely observed'' functional data, we adopted the individual smoothing strategy as interpolation. Another interesting and challenging situation is when the functional data are so sparsely observed that the individual smoothing strategy employed here is not effective. In this case, one may consider a functional principal components based approach to reconstruct individual curves. Recently, \cite{kneip2020optimal} proposed optimal reconstruction of individual curves in which each of the incomplete $n$ functions is observed at discrete points considerably smaller than $n$ in finite sample analysis. They showed that the functional principal components based approach can provide better rates of convergence than conventional smoothing methods, where $\min_i \{N_i\} \asymp n^\theta$ as $n \to \infty$ for some $\theta > 0$. However, hypothesis testing under the functional principal component analysis framework remains to be developed. 

\appendix




\section{Technical Details}

\subsection{Numerical Implementation}
We first present the numerical implementation of the proposed test for the fully observed response data. In practice, the response $Y_i(t)$ is collected in a discrete manner over a dense grid $t_1, \ldots, t_{N_i}$. For simplicity, we focus on the case $N_i = N$ and all the individual functions are observed at a common grid of design time points. If the design time points are different for different individual trajectories, we can apply the kernel smoothing to obtain the evaluations at a common grid points under its uniform consistency property, demonstrated in Section \ref{sec:kernel}.

Suppose that two design matrices $\vX$ and $\vZ$ are orthogonalized as in \eqref{fullmodel-re} and a set of orthonormal bases \{$v_l$; $l=1,\ldots, r\}$ is given for the null hypothesis \eqref{gnull}. We calculate $(p \times N)$ matrix $\hat {\boldsymbol{\beta}} = (\hat{\boldsymbol{\beta}}^\top_1, \ldots, \hat{\boldsymbol{\beta}}^\top_p)^\top$, where $\hat {\boldsymbol{\beta}}_j = \{\hat\beta_j(t_1), \ldots, \hat\beta_j(t_N)\}^\top$ is the least square estimator of $\vbeta_j$ at each grid. The test statistic $T_n^{\text{Full}}$ is obtained based on $D = \hat {\boldsymbol{\beta}} - \mathcal{L} \hat {\boldsymbol{\beta}}$, where $(p \times N)$ matrix $D=(D_1, \ldots, D_N)$ is defined with length $p$ vector $D_m$, $m=1\ldots, N$, and consists of $j$th row  representing the length $N$ regression residuals by fitting the linear regression for the response $\boldsymbol{\hat\beta}_j$ with $r$-columns of matrix V as covariates. Here, each column of matrix V are discretized orthonormal bases $v_1, \ldots, v_r$ evaluated at $N$ grid points. Then, the test statistic $T_n^{\text{Full}}$ is approximated by $N^{-1}\sum_{m=1}^N D_m^\top(\tilde{\mathbb{X}}^\top \tilde{\mathbb{X}}) D_m$. We next find the empirical critical value for the level $\alpha$ test. We calculate the $(N \times N)$ covariance matrix of residuals, denoted as $\Gamma = [\gamma_{m m'}]_{1 \leq m, m' \leq N}$, based on $ {\boldsymbol{r}}_i= \{r_i(t_1), \ldots, r_i(t_N)\}^\top$, $i=1,\ldots, n$, where $r_i(t_m) ={Y_i}(t_m) - \tilde{\mathbb{X}} \hat\vbeta(t_m) - \mathbb{Z} \hat\veta(t_m) $ under  $\hat\veta(t_m)=(\mathbb{Z}^\top \tilde{\mathbb{Z})}^{-1} \mathbb{Z}^\top {\vY}(t_m)$. We then derive the discretized $\tilde \gamma(s,t)$ following the formula in  \eqref{gp-var}, denoted as $\tilde \Gamma $, by calculating $\tilde \Gamma = \Gamma -\Gamma_{(c)} - \Gamma_{(r)} + \Gamma_{(c,r)}$, where $\Gamma_{(c)}=(\hat{ \boldsymbol{\gamma}}_{(c)1}, \ldots, \hat{\boldsymbol{\gamma}}_{(c)N})^\top$ is the matrix of the fitted multi-response regression values based on $N$ separate regressions, with each column of $\Gamma$ as the response, and the $r$-columns of $V$ as covariates. Simiarly, $\Gamma_{(r)}$ is the matrix of fitted multi-response regression values, with each row of $\Gamma$ as the response, and the $r$-columns of $V$ as covariates. Lastly, $\Gamma_{(c,r)}$ is the matrix of fitted values by applying previous two steps to $\Gamma$ subsequently. We then generate a large bootstrap samples of $\hat T_0 = \sum_{k=1}^{\hat K} \hat\lambda_k A_k $,  $A_k \stackrel{i.i.d.}{\sim} \chi^2_p ,$ where $\hat K$ denotes the number of positive eigenvalues of $\tilde \Gamma$, and use its $(1-\alpha)\%$ quantile as the critical value.


To perform the proposed hypothesis testing under irregularly collected response data with additive measurement errors, we replace $Y_i(t)$ by kernel smooth estimates of $\tilde Y_i^*(t)$, obtained by \eqref{kernsmooth}, and apply the procedures described for the fully observed response data. To find the optimal smooth parameter in kernel estimation, one may adopt the leave-one-out cross-validation.

For the application of the proposed testing procedure to partially observed functional response data, we calculate $(p \times N)$ matrix $D^w = \hat{\boldsymbol{\beta}}^w - \mathcal{L} \hat{\boldsymbol{\beta}}^w$, where $D^w = (D_1^w, \ldots, D_N^w)$ and $\hat {\boldsymbol{\beta}}^w_j = (\hat\beta_j^w(t_1), \ldots, \hat\beta_j^w(t_N))^\top$, and approximate $T_n =N^{-1}\sum_{m=1}^N {D^w_m}^\top(\tilde{\mathbb{X}}^\top \tilde{\mathbb{X}}) D^w_m$. To find the empirical critical value for the level $\alpha$ test, we estimate $\Gamma$ by employing nonparametric covariance surface estimation method applicable to sparse functional data, available through the function \texttt{GetCovSurface} in R package `fdapace'. The optimal bandwidth for the surface estimation can be found through the cross-validation steps. Next, we calculate $\hat v(t_m,t_{m'}) = \sum_{i=1}^{n} \delta_i(t_m) \delta_i(t_{m'})/n,$ and $\hat b(t_m) = \sum_{i=1}^{n} \delta_i(t_m)/n$. Then $(N \times N)$ matrix $\Xi$, discretized version of $\vartheta(s,t)$ in Theorem \ref{cor:regression}, can be derived, where its $(m, m')$-th element is calculated as $\Xi_{mm'}=\Gamma_{mm'} \Pi_{mm'}$. Here, $(N \times N)$ matrix $\Pi$ has its $(m,m')$-th element as $\Pi_{mm'}=\hat v(t_m, t_{m'})\hat b(t_m)^{-1} \hat b(t_{m'})^{-1}.$ We next calculate  $\tilde \Xi = \Xi -\Xi_{(c)} - \Xi_{(r)} + \Xi_{(c,r)}$, where $\Xi_{(c)}$ and $\Xi_{(r)}$ are obtained by following definitions of each term, described in the implementation for the fully observed data. In practical application, we adopt the standardized test statistic ${\breve{T}}_n$ $=$ $N^{-1}\sum_{m=1}^N$ ${\breve{D}}_m^{w \top}(\tilde{\mathbb{X}}^\top \tilde{\mathbb{X}}) {\breve{D}}_m^{w}$, where ${\breve{D}}_m^{w} = D_m^w \hat b(t_m) \hat v(t_m, t_m)^{-1/2}$ and obtain an approximate critical value from $\tilde \Xi^* = \Xi^* -\Xi_{(c)}^* - \Xi_{(r)}^* + \Xi_{(c,r)}^* $, where $ \Xi^*_{mm'} = \Gamma_{mm'} \Pi^*_{mm'}$ with $\Pi^*$ representing the standardized matrix of $\Pi$ having unit variance for diagonals. The standardized testing procedure empirically shows the improved performance in size controlling in simulation studies. 

Lastly, we can calculate $T_n^{\ast \ast}$ by combining two previous steps, smoothing process over observed trajectories and calculation of test-statistic under partial structures. 

\subsection{Technical Details for Section \ref{subsec:partial}}

\subsubsection*{Proof of Theorem \ref{cor:regression}}

Suppose $\E(\tilde{\mathbb{X}}| \mathbb{Z}) = \vzero$ without loss of generality. Under the null hypothesis,
\begin{align}
    \begin{split}
        \hat \vbeta^w(t) 
        &= (\tilde{\mathbb{X}}^\top \mathbb{W}(t) \tilde{\mathbb{X}})^{-1} \tilde{\mathbb{X}}^\top \mathbb{W}(t) \{\tilde{\mathbb{X}}^\top \vbeta_0(t) + \mathbb{Z}^\top \boldsymbol{\eta}(t)  + \vepsilon(t)\} \\
        &= \vbeta_0(t)+  (\tilde{\mathbb{X}}^\top \mathbb{W}(t) \tilde{\mathbb{X}})^{-1} \tilde{\mathbb{X}}^\top \mathbb{W}(t) \vepsilon(t),
    \end{split} \nonumber
\end{align}

and let $\vZ_n(t) = \sqrt{n} (\hat\vbeta^w(t) - \vbeta_0(t)) $. Then we can write 
\begin{equation}
\begin{aligned}
\vZ_n(t) &= \Big( \frac{\tilde{\mathbb{X}}^\top \mathbb{W}(t) \tilde{\mathbb{X}}}{nb(t)} \Big)^{-1} \frac{\sqrt{n}\tilde{\mathbb{X}}^\top \mathbb{W}(t) \vepsilon(t)}{nb(t)} \nonumber\\
 &= \frac{nb(t)}{\sum_{i=1}^{n} \delta_i(t)}\Big( \frac{\tilde{\mathbb{X}}^\top \mathbb{W}(t) \tilde{\mathbb{X}}}{\sum_{i=1}^{n} \delta_i(t)}\Big)^{-1}  \frac{\sqrt{n}\tilde{\mathbb{X}}^\top \mathbb{W}(t) \vepsilon(t)}{nb(t)} \label{beta-exp}.
\end{aligned}
\end{equation}
Let $\tilde{\mathbb{X}}=(\tilde{\mathbb{X}}_1, \ldots,\tilde{\mathbb{X}}_n)^\top$, where $\tilde{\mathbb{X}}_i = (\tilde{x}_{i1}, \ldots, \tilde{x}_{ip})^\top$, and let 
\begin{equation}
    \vV_n(t) = n^{-1/2} \tilde{\mathbb{X}}^\top \mathbb{W}(t) \vepsilon(t)/b(t)
\end{equation} 
be the $p$-variate random functions with the zero mean, corresponding to the third term in \eqref{beta-exp}. Its $j$-th element is specifically written as $ n^{-1/2} \sum_{i=1}^n \tilde{x}_{ij} \delta_i(t) \epsilon_i(t) / b(t)$. Let ${\mathbb{V}}_n = (\vV_n(t_1), \ldots, \vV_n(t_Q))$, where $\mathcal{T}_Q = \{t_q \in [0,1]: q=1, \ldots, Q\}$ is a finite collection of any $Q$ time points, for $Q \geq 1$. By the multivariate CLT and the mutual independence among $\tilde x_{ij}$, $\delta_i$, and $\epsilon_i$, we have
\begin{equation} \nonumber
\mathrm{vec}({\mathbb{V}}_n) = \big( \vV_n(t_1), \ldots, \vV_n(t_Q) \big)^\top \stackrel{d}{\to} MVN(\boldsymbol{0}_{pQ}, \Xi \otimes \Psi),
\end{equation}
where $\Xi = \big[\vartheta_{q q'}\big]_{1 \leq q,q' \leq Q}$ is the $Q \times Q$ covariance matrix with 
\[\vartheta_{qq'} = \gamma(t_q, t_{q'}) v(t_q, t_{q'}) b(t_q)^{-1} b(t_{q'})^{-1},\] 
$\Psi = \big[ \Psi_{jj'}\big]_{1 \leq j,j' \leq p}$ is the $p \times p$ matrix with $\Psi = E(\Var(\boldsymbol{X} | \boldsymbol{Z} ))$, and the Kronecker product of $\Xi$ and $\Psi$ is given by
	\begin{equation} \nonumber
		\Xi \otimes \Psi
		=
		\left[
		\begin{array}{ccc}
			\vartheta_{11} \Psi & \cdots & \vartheta_{1Q} \Psi\\
			\vdots	&	\ddots	&	\vdots\\
			\vartheta_{Q1} \Psi & \cdots & \vartheta_{QQ} \Psi
		\end{array}
		\right]
		\in \mathbb{R}^{(pQ) \times (pQ)}.
	\end{equation}
We specifically derive $\Xi \otimes \Psi$ as follows. For $p$-variate random variable $\vV_n(t_q)$, the diagonal of its asymptotic covariance matrix, i.e., $(j,j)$-th element of the matrix, is derived as  $\gamma(t_q, t_q)b(t_q)^{-1} E (\tilde x_{ij}^2) = \vartheta_{qq} \Var(\tilde x_{ij}) $, and the $(j,j')$-th element of the covariance matrix, for $j \neq j'$, is $\gamma(t_q, t_q)b(t_q)^{-1} E (\tilde x_{ij}  \tilde x_{ij'}) = \vartheta_{qq} \Cov(\tilde x_{ij}, \tilde x_{ij'}) $. That is, the block diagonal covariance matrix of $\mathrm{vec}({\mathbb{V}}_n)$ is $\vartheta(t_q, t_q) \Psi$. We then examine the block off-diagonal covariance matrix of  $\mathrm{vec}( {\mathbb{V}}_n)$ by calculating the covariance between $\vV_n(t_q)$ and $\vV_n(t_{q'})$, for $q \neq q'$. we can show that the diagonal $(j,j)$-th element of the covariance matrix is $\vartheta_{qq'}\Var(\tilde{x}_{ij})$ and the $(j,j')$-th element of the matrix, for $j \neq j'$, is $\vartheta_{qq'}\Cov(\tilde{x}_{ij}, \tilde{x}_{ij'})$. That is, $p \times p$ off-diagonal block covariance matrix of $\mathrm{vec}({\mathbb{V}}_n)$ is written as $\vartheta_{qq'} \Psi$. 
By \cite{gupta2018matrix} and \cite{chen2020multivariate}, the multivariate process $\{ \vV_n(t) : t \in [0,1] \}$ converges to the multivariate Gaussian process in distribution as
\[
\{ \vV_n(t) : t \in [0,1] \} \stackrel{d}{\to} GP_p (\vzero_{p}, \vartheta \Psi),
\]
where the finite-dimensional restrictions of $\vartheta$ is given by the covariance matrix $\Xi$. Next, we can show that the $(p \times p)$ matrix $\tilde{\mathbb{X}}^\top \mathbb{W}(t) \tilde{\mathbb{X}}/ \sum_{i=1}^{n} \delta_i(t)$ in the second term of \eqref{beta-exp} converges to $\Psi$ in probability, under the conditions C2 and C4. Let $\tilde{\vV}_n(t) = (\tilde{\mathbb{X}}^\top \mathbb{W}(t) \tilde{\mathbb{X}}/ \sum_{i=1}^{n} \delta_i(t))^{-1}$ $\vV_n(t)$, then $\tilde\vV_n(t) \stackrel{d}{\to} GP_p(\vzero_p, \vartheta \Psi^{-1})$, for $t \in [0,1],$ by the Slutksy's lemma. Note that
\[
\sup_{t \in [0,1]}  \left|  \tilde\vV_n(t) - \vZ_n(t) \right| \leq \sup_{t \in [0,1]} |\tilde\vV_n(t) | \cdot \sup_{t \in [0,1]} \left|  1 - \frac{nb(t)}{\sum_{i=1}^{n} \delta_i(t)}\right|,
\]
where $\sup_{t \in [0,1]}|\vZ(t)| \triangleq \sup_{t \in [0,1]} \sup_{j \in \{1,\ldots, p \}} Z_j(t)$, for $p$-variate random functions $\vZ(t)=(Z_1(t),$ $\ldots,$ $Z_p(t))^\top$. Following the similar lines of the proof of Theorem 4 and the Lemma 2.1 provided in \cite{Park2021}, we have
\[
\sup_{t \in [0,1]}  \left|  \tilde\vV_n(t) - \vZ_n(t) \right| = O_p(n^{-1/2}).
\]
Then Corollary \ref{cor:regression} is an immediate consequence of Slutksy's lemma.

\subsubsection*{Proof of Theorem \ref{thm-alternative-dist} and Corollary \ref{thm-power}}
We first present the proof of Theorem \ref{thm-alternative-dist}.
Following the similar arguments used in Theorem 1 by  \cite{zhang2011statistical}, we have
\begin{equation}
\begin{aligned}
	T_n
	&= \sum_{j=1}^p \int_0^1 W_j(t)^2 \, \mathrm{d}t + o_P(1)\\
	&= \sum_{j=1}^p \sum_{m=1}^{\infty} \psi_{jm}^2 + o_P(1),
\end{aligned}
\end{equation}
where $\mathbf{W} = (W_1, \ldots, W_p)^\top \sim \textrm{GP}_p( \tilde{\boldsymbol\Delta}, \tilde\vartheta \mathbb{I}_p)$. The eigen-decomposition of $\tilde\vartheta(s,t)$ leads to $W_j(t) = \sum_{m=1}^{\infty} \psi_{jm} \phi_m(t)$, where the series converges in $L^2$, uniformly for $t \in (0,1)$, and $\psi_{jm} = \langle W_j, \phi_m \rangle  \sim N(\langle \tilde\Delta_j, \phi_m \rangle, \lambda_m)$ independent for all $j=1, \ldots, p$ and $m \geq 1$. 
Since $\| \tilde\Delta_j \|_2^2 = \sum_{m=1}^\infty |\langle \tilde\Delta_j, \phi_m \rangle|^2 < \infty$ 
it follows that 
\[ 
\sum_{m=1}^\infty \textrm{Var} \big(\psi_{jm}^2\big)  = \sum_{m=1}^\infty 2\lambda_m\big(1 + 2 |\langle \tilde\Delta_j, \phi_m \rangle|^2/\lambda_m\big) < \infty
\]
for all $j=1, \ldots, p$. Therefore,
\begin{equation}
	T_\Delta
	\stackrel{a.s.}{=} \sum_{m=1}^{\infty} \sum_{j=1}^p \psi_{jm}^2
	\stackrel{d}{=}
	\sum_{m=1}^\infty \lambda_m B_m,
\end{equation}
where $B_m = \sum_{j=1}^p \psi_{jm}^2/\lambda_m$ has the non-central $\chi^2$-distribution with $p$ degrees of freedom and the non-central parameter $\kappa_m^2 = \pi_m^2/\lambda_m$. Since $W_1, \ldots, W_p$ are independent Gaussian processes, $B_1, B_2, \ldots$ are independent. The proof of Corollary \ref{thm-power} case (i) is a special case with non-centrality parameter on $\chi^2$ distribution with $p$ degrees of freedom. 

\subsubsection*{Proof of Corollary \ref{thm-power} case (ii)}
	The proof with $\tau \in [0,1)$ follows from Theorem \ref{thm-alternative-dist} as $\Psi^{1/2}  (\boldsymbol{\mathcal{I}} - \boldsymbol{\mathcal{L}})(n^{-\tau/2}\boldsymbol\Delta) \to \infty$ as $n \to \infty$. When $\tau=1$, we assume that $\sum_{m=1}^\infty \pi_m^2 = \infty$. Let $\zeta_{jm}$ denote a standard normal random variable independent for all $j=1, \ldots, p$ and $m \geq 1$. We note that
	\begin{equation}
	\begin{aligned}
		B_m
		&= \sum_{j=1}^p \psi_{jm}^2/\lambda_m\\
		&\stackrel{d}{=} \sum_{j=1}^p \zeta_{jm}^2 + 2 \sum_{j=1}^p \zeta_{jm} \langle \tilde\Delta_j, \phi_m \rangle/\sqrt{\lambda_m} + \sum_{j=1}^p |\langle \tilde\Delta_j, \phi_m \rangle|^2/\lambda_m\\
		&\stackrel{d}{=} A_m + 2 \tilde\rho_m \zeta_{1m} + \pi_m^2/\lambda_m,
	\end{aligned}
	\end{equation}
	where $\tilde\rho_m = \sum_{j=1}^p \langle \tilde\Delta_j, \phi_m \rangle/\sqrt{\lambda_m}$ for $m \geq 1$ with $A_m$ and $B_m$ defined in the previous theorems. It follows from Corollary \ref{thm-power} case (i) and \eqref{thm-alternative-dist-eq} that
	\begin{equation}
	\begin{aligned}
		\lim_{n \to \infty} P(T_\Delta \geq t_\alpha | H_{1n})
		&= P\bigg( \sum_{m=1}^\infty \lambda_m B_m \geq t_\alpha \bigg)\\
		&= P\bigg( T_0 + 2 \sum_{m=1}^\infty \lambda_m \tilde\rho_m \zeta_{1m} + \sum_{m=1}^\infty \pi_m^2 \geq t_\alpha \bigg).
	\end{aligned}
	\end{equation}
	Let $\Pi^2 = \sum_{m=1}^\infty \lambda_m \pi_m^2$.
	We note that 
 \begin{equation}
 \begin{aligned}
     \sum_{m=1}^\infty \textrm{Var}(\lambda_m\tilde\rho_m \zeta_{1m})
     &= \sum_{m=1}^\infty \lambda_m^2 \bigg( \sum_{j=1}^p \langle \tilde\Delta_j, \phi_m \rangle/\sqrt{\lambda_m} \bigg)^2\\
     &\leq p^2 \sum_{m=1}^\infty \lambda_m \pi_m^2 = p^2 \Pi^2,
 \end{aligned}
 \end{equation}
	where $\Pi^2 \leq \lambda_1 \sum_{m=1}^\infty \pi_m^2 = \lambda_1 \sum_{j=1}^p \| \tilde\Delta_j \|_2^2 < \infty$. Therefore, we can write $\sum_{m=1}^\infty \lambda_m \tilde\rho_m \zeta_{1m} \stackrel{d}{=} \Pi Z_0$, where $Z_0 \sim N(0,1)$ is independent of $T_0$. This completes the proof.


\subsection{Technical Details for Section \ref{sec:kernel}}

\begin{lem} \label{lem-unif-conv}
	Let $\eta_1(t), \ldots, \eta_N(t)$ be independent and random functions such that there exists $B_N > 0$ satisfying
	$\max_{1 \leq j \leq N} E \|\eta_j\|_\infty^k = O(B_N)$ for some $k > 2$ and
	$\max_{1 \leq j \leq N} \mathrm{Lip}(\eta_j) = O_P(1)$, where $\mathrm{Lip}(f)$ denotes the Lipschitz constant of $f$.
	Suppose that $h \asymp N^{-\alpha}$ for some $\alpha \in \big(0,\frac{k-2}{k} \big)$ and that $B_N=O(1)$. Then,
	\begin{equation}
		\sup_{t \in [0,1]} \bigg| N^{-1} \sum_{j=1}^{N} \xi_{N,j}(t) \bigg| = O_P\Big( N^{-1/2} h^{-1/2} \sqrt{\log N} \Big)
		\label{generic-unif-conv}
	\end{equation}
	where $\xi_{N,j}(t) = K_h(T_{j} - t) \eta_{j}(t) - E \big( K_h(T_j - t) \eta_j(t) \big)$.
\end{lem}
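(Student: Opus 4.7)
The plan is to combine a discretization (peeling) argument in $t$ with a truncation-plus-Bernstein bound at fixed $t$. The starting point is that for each fixed $t$, the summands $\xi_{N,j}(t)$ are independent with mean zero, conditional variance of order $h^{-1}$ (from a change of variables $u = (T_j - t)/h$ using boundedness of the sampling density and $E\|\eta_j\|_\infty^2 = O(B_N) = O(1)$), and are bounded by a constant multiple of $h^{-1}\|\eta_j\|_\infty$. So if $\|\eta_j\|_\infty$ were bounded, Bernstein's inequality at a single $t$ would give a tail of order $\exp(-c t^2/(Nh^{-1} + th^{-1}))$, which at $t \asymp N^{1/2}h^{-1/2}\sqrt{\log N}$ yields the desired rate. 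The job is therefore to (i) pass from pointwise to uniform and (ii) handle the unboundedness via truncation.

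For the uniform step, I would introduce a deterministic grid $\{t_\ell : \ell = 1,\ldots,M\}$ of $[0,1]$ with mesh $1/M$, and split
\[
    \sup_{t \in [0,1]} \Bigl| N^{-1} \sum_{j=1}^N \xi_{N,j}(t) \Bigr|
    \le \max_{1\le \ell \le M} \Bigl| N^{-1} \sum_{j=1}^N \xi_{N,j}(t_\ell) \Bigr|
    + \sup_{|s-t|\le 1/M} \Bigl| N^{-1} \sum_{j=1}^N \bigl(\xi_{N,j}(t)-\xi_{N,j}(s)\bigr)\Bigr|.
\]
The oscillation on a cell of size $1/M$ is controlled by the Lipschitz bounds: $K_h$ is Lipschitz with constant $O(h^{-2})$ (assuming $K$ is Lipschitz), and the Lipschitz constant of $\eta_j(\cdot)$ is $O_P(1)$ uniformly in $j$. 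This gives an almost-sure bound $|\xi_{N,j}(t) - \xi_{N,j}(s)| \le C(h^{-2}\|\eta_j\|_\infty + h^{-1}\mathrm{Lip}(\eta_j))|t-s|$, plus an identical bound on the expectation part. Choosing $M$ to be a sufficiently large polynomial in $N$ (roughly $M \asymp N^{1/2} h^{-3/2}$, keeping $M^{-1}$ times the modulus below $N^{-1/2}h^{-1/2}\sqrt{\log N}$) reduces the problem to the $\max$ over $M$ grid points, with only an extra $\log M \asymp \log N$ factor in any eventual union bound.

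At each grid point I would apply a standard truncation. Split $\xi_{N,j}(t) = \xi_{N,j}^{(1)}(t) + \xi_{N,j}^{(2)}(t)$ according to whether $\|\eta_j\|_\infty \le \tau_N$ or not, with truncation level $\tau_N = N^{1/k + \varepsilon}$ for a small $\varepsilon > 0$. The tail part is negligible because Markov's inequality and independence give $P(\max_{j\le N}\|\eta_j\|_\infty > \tau_N) \le N \tau_N^{-k} E\|\eta_j\|_\infty^k = o(1)$, so $\sum_j \xi_{N,j}^{(2)}(t) \equiv 0$ with probability tending to one. For the truncated part, $|\xi_{N,j}^{(1)}(t)| \lesssim h^{-1}\tau_N$ and $\mathrm{Var}(\xi_{N,j}^{(1)}(t)) \lesssim h^{-1}$, so Bernstein's inequality yields, for $t = C\sqrt{Nh^{-1}\log N}$,
\[
    P\Bigl(\Bigl|\sum_{j=1}^N \xi_{N,j}^{(1)}(t)\Bigr| > t\Bigr)
    \le 2\exp\!\left(-\frac{c\, t^2}{Nh^{-1} + t h^{-1}\tau_N}\right).
\]
A union bound over the $M$ grid points then requires the exponent to dominate $\log M \asymp \log N$; a direct calculation shows this reduces to the condition $\tfrac{1}{k} + \varepsilon + \tfrac{\alpha}{2} < \tfrac{1}{2}$, i.e., $\alpha < (k-2)/k$, which is exactly the bandwidth hypothesis of the lemma.

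The main obstacle is making the truncation balance work cleanly: the truncation level $\tau_N$ must be large enough that the tail $\sum \xi_{N,j}^{(2)}$ vanishes in probability, yet small enough that the linear-in-$\tau_N$ term in Bernstein's denominator does not spoil the sub-Gaussian regime. The admissible range for $\tau_N$ exists precisely because $\alpha < (k-2)/k$, so the proof essentially reduces to pinning down $\tau_N$ and $M$ as small polynomial powers of $N$ and tracking constants; the rest is routine. One minor nuisance is that the Lipschitz constants $\mathrm{Lip}(\eta_j)$ are random and only controlled in probability, so the discretization step should be carried out on the event $\{\max_j \mathrm{Lip}(\eta_j) \le L\}$ for large $L$, whose complement has arbitrarily small probability.
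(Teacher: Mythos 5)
Your proposal is correct and follows essentially the same route as the paper's proof: truncate $\|\eta_j\|_\infty$ at a polynomial level just above $N^{1/k}$ (the paper uses $N^{1/2-c}h^{1/2}$ with $0<c<\tfrac{k-2-k\alpha}{2k}$, which is the same window), discretize $[0,1]$ on a polynomial grid, control the oscillation via the Lipschitz constants of $K_h$ and $\eta_j$, and apply a Bernstein-type bound with a union bound at the grid points, with $\alpha<(k-2)/k$ emerging as exactly the condition that makes the truncation level admissible. One small imprecision: the tail piece $\sum_j\xi_{N,j}^{(2)}(t)$ is not identically zero on the event $\{\max_j\|\eta_j\|_\infty\le\tau_N\}$ because it retains the deterministic centering term $E\bigl(K_h(T_j-t)\eta_j(t)\,\mathbb{I}(\|\eta_j\|_\infty>\tau_N)\bigr)$, which must be shown to be $o(N^{-1/2}h^{-1/2})$ uniformly by a separate H\"older/Markov bound using the same $k$-th moment condition (as the paper does), though this works out under the same constraint on $\alpha$.
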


\begin{proof}
	For $0 < c < \frac{k-2 - k\alpha}{2k}$, let $\tilde\eta_{j}(t) = \eta_{j}(t) \mathbb{I}\big(\|\eta_{j}\|_\infty \leq N^{1/2-c} h^{1/2}\big)$ be the truncation of $\eta_{j}(t)$ by the magnitude of $N^{1/2-c} h^{1/2}$. We claim that
	\begin{equation}
	\begin{aligned}
		N^{-1} \sum_{j=1}^{N} \xi_{N,j}(t)
		&= N^{-1} \sum_{j=1}^{N} \tilde\xi_{N,j}(t) + o_P\big( N^{-1/2} h^{-1/2} \big)
	\end{aligned} \label{lem-truncation}
	\end{equation}
	uniformly for $t \in [0,1]$, where $\tilde\xi_{N,j}(t) = K_h(T_{j} - t) \tilde\eta_{j}(t) - E \big( K_h(T_j - t) \tilde\eta_j(t) \big)$. Then, it can be verified that
	\begin{equation}
		\sup_{t \in [0,1]} \bigg| N^{-1} \sum_{j=1}^{N} \tilde\xi_{N,j}(t) \bigg| = O_P \Big( N^{-1/2} h^{-1/2}\sqrt{\log N} \Big)
		\label{lem-trunc-unif-conv}
	\end{equation}
	as \eqref{generic-unif-conv}. To see this, let $\mathcal{T}_\delta(m)$ denote a finite $\delta$-covering of $[0,1]$ such that $1/\delta \leq |\mathcal{T}_\delta(m)| \leq N^m$, i.e., any $t \in [0,1]$, there exists $t' \in \mathcal{T}_\delta(m)$ such that $|t - t'| \leq N^{-m} \leq \delta$. It follows that
	\begin{equation}
	\begin{aligned}
		\sup_{t \in [0,1]} \bigg| N^{-1} \sum_{j=1}^{N} \tilde\xi_{N,j}(t) \bigg|
		&\leq \sup_{t \in \mathcal{T}_\delta(m)} \bigg| N^{-1} \sum_{j=1}^{N} \tilde\xi_{N,j}(t) \bigg| \\
		&\qquad + \sup_{t,t' \in [0,1]: \, |t-t'| \leq N^{-m}} \bigg| N^{-1} \sum_{j=1}^{N}\big( \tilde\xi_{N,j}(t) - \tilde\xi_{N,j}(t') \big) \bigg|.
 	\end{aligned} \label{lem-finite-cover1}
	\end{equation}
	We note that the second term is negligible as
	\begin{equation}
	\begin{aligned}
		&\sup_{t,t' \in [0,1]: \, |t-t'| \leq N^{-m}} \bigg| N^{-1} \sum_{j=1}^{N}\big( \tilde\xi_{N,j}(t) - \tilde\xi_{N,j}(t') \big) \bigg|\\
		&\qquad\qquad \leq 2 N^{-m} \bigg( \mathrm{Lip}(K) N^{1/2-c}h^{-3/2} + \| K \|_\infty h^{-1}  \max_{1 \leq j \leq N} \mathrm{Lip}(\eta_j) \bigg) \\
		&\qquad\qquad = O_P\Big(N^{-1/2}h^{-1/2} N^{-m} \big( N^{1 + \alpha - c} \vee N^{(1+\alpha)/2}\big) \Big)\\
		&\qquad\qquad = o_P\big( N^{-1/2}h^{-1/2} \big)
 	\end{aligned} \label{lem-finite-cover2}
	\end{equation}
	for some $m > 0$. Also, applying the standard techniques for the exponential bound of large deviations, we get
	\begin{equation}
	\begin{aligned}
		&P\bigg( \sup_{t \in \mathcal{T}_\delta(m)} \bigg| N^{-1} \sum_{j=1}^{N} \tilde\xi_{N,j}(t)\bigg|  >  C \cdot N^{-1/2} h^{-1/2}\sqrt{\log N} \bigg) \\
		&\qquad\qquad  \leq \sum_{t \in \mathcal{T}_\delta(m)}  P\bigg(  \bigg|N^{-1/2+c}h^{1/2}  \sum_{j=1}^{N} \tilde\xi_{N,j}(t)\bigg| > C \cdot N^c \sqrt{\log N} \bigg)\\
		&\qquad\qquad \leq 2N^{m + c_0 - C} \to 0 \quad (N \to \infty)
 	\end{aligned} \label{lem-lerge-dev-bound}
	\end{equation}
	for some large $C > 0$, where $c_0 = c_0(K,\alpha, c) > 0$ is a constant that depends on $K$, $\alpha$, $c$ but $\mathcal{T}_\delta(m)$. Therefore, \eqref{lem-finite-cover1} together with \eqref{lem-finite-cover2} and \eqref{lem-lerge-dev-bound} gives \eqref{lem-trunc-unif-conv}.
	
	Now, we prove the claim \eqref{lem-truncation}. Define $\mathcal{E}_j = \big( \|\eta_j\|_\infty \leq N^{1/2-c} h^{1/2} \big)$ for $j=1, \ldots, N$. It follows from Markov's inequality that
	\begin{equation}
	\begin{aligned}
		P\bigg( \bigcap_{j=1}^N \mathcal{E}_j \bigg)
		&\geq 1 - \sum_{j=1}^N P\big( \|\eta_j\|_\infty >N^{1/2-c} h^{1/2} \big)\\
		&\geq 1 - B_N N^{1 - k\big(\frac{1}{2}- c\big)} h^{-k/2} \\
		&= 1 - B_N N^{-k \big( \frac{k-2-k\alpha}{2k} - c \big)} 
		\to 1 \quad (N \to \infty).
	\end{aligned} \label{lem-unif-prob-bound}
	\end{equation}
	This implies that the $\eta_j(t)$ and $\tilde\eta_j(t)$ are equivalent to each other with probability tending to $1$ uniformly for $t \in [0,1]$. We also note that
	\begin{equation}
	\begin{aligned}
		&\sup_{t \in [0,1]} \Big| E \Big(K_h(T_j - t) \big(\eta_j(t) - \tilde\eta_j(t) \big)\Big) \Big| \\
		&\qquad\qquad = \sup_{t \in [0,1]} \Big| E \Big(K_h(T_j - t) \eta_j(t) \mathbb{I}\big( \|\eta_j\|_\infty > N^{1/2-c} h^{1/2} \big)\Big) \Big|\\
		&\qquad\qquad \leq \| K \|_\infty B_N N^{-(k-1)\big(\frac{1}{2} - c\big)}h^{-1-(k-1)/2} \\
		&\qquad\qquad \leq \| K \|_\infty B_N N^{-c -k\big( \frac{k -2 - k\alpha}{2k} - c\big)} N^{-1/2} h^{-1/2}   
		 = o\big( N^{-1/2} h^{-1/2} \big).
 	\end{aligned} \label{lem-unif-expectation}
	\end{equation}
	Finally, \eqref{lem-unif-prob-bound} and \eqref{lem-unif-expectation} imply \eqref{lem-truncation}, which completes the proof of the lemma.
\end{proof}

\begin{lem} \label{thm-kernel-unif-conv}
	Let $\mu(t) = E Y(t)$ be continuously twice differentiable in $t \in [0,1]$, where $\| \mu' \|_\infty$ and $\| \mu'' \|_\infty$ exist and are finite. Suppose that $E \|Y\|_\infty^k < \infty$ for some $k > 2$ and that $ \max_{1 \leq i \leq n} \| Y' \|_\infty$ is bounded in probability. If $P(N < a_n) = o(n^{-1})$, where $a_n \asymp n^\theta$ for some $\theta > 0$, then
	\begin{equation}
		\tilde{Y}_i^\ast(t) - Y_i(t) = O_P\Big( r_n(t) + n^{-\theta/2} h^{-1/2} \sqrt{\log n} \Big) \quad (i=1, \ldots, n)
		\label{thm-unif-rate-eq}
	\end{equation}
	uniformly for $t \in [0,1]$ , where $h \asymp n^{-\theta\alpha}$ for some $\alpha \in \big(0,\frac{k-2}{k}\big)$, $r_n(t) \asymp h^2$ if $t \in [h,1-h]$, and $r_n(t) \asymp h$ otherwise.
\end{lem}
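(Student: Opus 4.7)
The plan is to carry out the classical bias-variance analysis for the Nadaraya-Watson ratio, adapted to individual-curve reconstruction, and then to lift a pointwise-in-$i$ rate to a simultaneous-in-$i$ one using the tail bound $P(N_i < a_n) = o(n^{-1})$. Writing $\tilde Y_i^\ast(t) = \hat r_i(t)/\hat s_i(t)$ with $\hat s_i(t) = N_i^{-1} \sum_m K_h(T_{i,m}-t)$ and $\hat r_i(t) = N_i^{-1}\sum_m K_h(T_{i,m}-t)\,Y_{i,m}^\ast$, and substituting $Y_{i,m}^\ast = Y_i(T_{i,m}) + \varepsilon_{i,m}$, I get the numerator decomposition
\[
\hat r_i(t) - Y_i(t)\hat s_i(t) = A_i(t) + E_i(t),
\]
where $A_i(t) = N_i^{-1}\sum_m K_h(T_{i,m}-t)[Y_i(T_{i,m})-Y_i(t)]$ isolates the curve-bias contribution and $E_i(t) = N_i^{-1}\sum_m K_h(T_{i,m}-t)\,\varepsilon_{i,m}$ the measurement-error one.

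First I would control the denominator by applying Lemma \ref{lem-unif-conv} with $\eta_j \equiv 1$, giving $\sup_t|\hat s_i(t) - E\hat s_i(t)| = O_P(N_i^{-1/2}h^{-1/2}\sqrt{\log N_i})$, while $E\hat s_i(t) = \int K(z)\lambda(t+hz)\,dz$ is bounded away from zero uniformly in $t$ because $\lambda$ is. Next, $E_i$ is handled by Lemma \ref{lem-unif-conv} with $\eta_j = \varepsilon_{i,j}$, which is mean zero and has zero Lipschitz constant in $t$; the moment condition on $\varepsilon$ supplies $B_N = O(1)$, so $\sup_t|E_i(t)| = O_P(N_i^{-1/2}h^{-1/2}\sqrt{\log N_i})$.

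The term $A_i$ is the heart of the argument. Conditionally on the path $Y_i$, its mean is $B_i(t) = \int K_h(u-t)[Y_i(u)-Y_i(t)]\lambda(u)\,du$. A third application of Lemma \ref{lem-unif-conv}, now with $\eta_j(t) = Y_i(T_{i,j}) - Y_i(t)$, verifies the required hypotheses via $E\|Y\|_\infty^k < \infty$ and $\mathrm{Lip}(\eta_j) \leq \|Y_i'\|_\infty = O_P(1)$, yielding $\sup_t|A_i(t)-B_i(t)| = O_P(N_i^{-1/2}h^{-1/2}\sqrt{\log N_i})$. For $B_i$, the change of variable $u = t+hz$ combined with the decomposition $Y_i = \mu + U_i$ separates a $C^2$ piece (which, by symmetry of $K$ and smoothness of $\mu$ and $\lambda$, contributes the classical $O(h^2)$ in the interior and $O(h)$ near the boundary) from a residual $U_i$-piece; the latter is controlled by the Lipschitz bound $\|Y_i'\|_\infty = O_P(1)$ and should inherit the same interior/boundary rates after exploiting kernel symmetry against the bounded derivative of $\lambda$, giving $\sup_t|B_i(t)| = O_P(r_n(t))$.

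Combining via $\tilde Y_i^\ast(t) - Y_i(t) = \hat s_i(t)^{-1}\{B_i(t) + (A_i(t)-B_i(t)) + E_i(t)\}$ yields the pointwise-in-$i$ bound $O_P(r_n(t) + N_i^{-1/2}h^{-1/2}\sqrt{\log N_i})$. To make this uniform in $i$, I work on the event $\mathcal A_n = \bigcap_{i=1}^n\{N_i \geq a_n\}$, which by a union bound has probability $1 - n\cdot o(n^{-1}) = 1 - o(1)$, and on which $N_i^{-1/2}\sqrt{\log N_i}$ is dominated by $a_n^{-1/2}\sqrt{\log n} \asymp n^{-\theta/2}\sqrt{\log n}$. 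The main obstacle I foresee lies in the bias step: obtaining the interior $h^2$ rate for $B_i$ under only the Lipschitz control $\|Y_i'\|_\infty = O_P(1)$, since the classical argument wants $Y_i \in C^2$. One must either isolate the $O(h^2)$ contribution purely from $\mu$ and show that the $U_i$-residual is absorbed at the same rate via kernel symmetry, or exploit additional path regularity implicit in the setup of Section \ref{subsec:partial}; the boundary rate $O(h)$ is then comparatively routine because the vanishing moment of $K$ is lost there.
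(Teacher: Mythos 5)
Your decomposition and proof strategy coincide with the paper's: the paper likewise works on the event $S_n=\bigcap_{i=1}^n\{N_i\ge a_n\}$ (union bound, probability $1-o(1)$), splits $\tilde Y_i^\ast(t)-Y_i(t)=\hat g_i^A(t)+\hat g_i^B(t)$ into a measurement-error part and a curve-increment part normalized by the kernel density estimate $\hat\lambda_i(t)$, and invokes Lemma \ref{lem-unif-conv} three times exactly as you propose --- constant weights for the denominator, $\varepsilon_{i,j}$ for the error part, and $\eta_{i,j}(t)=Y_i(T_{i,j})-Y_i(t)$ for the increment part; see \eqref{prop-gA-gB}.

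The one place you diverge is the bias of the increment term, and the obstacle you flag there is genuine. The paper centers $K_h(T_{i,j}-t)\eta_{i,j}(t)$ at its \emph{unconditional} expectation, so that in \eqref{eq-prop-mean} only $\mu=EY$ --- which is assumed $C^2$ --- appears and the classical $h^2$ interior / $h$ boundary expansion goes through; the path-specific part $U_i=Y_i-\mu$ is thereby swept into the stochastic fluctuation, whose rate $O_P\big(n^{-\theta/2}h^{1/2}\sqrt{\log n}\big)$ (note the $h^{1/2}$, coming from $|\eta_{i,j}(t)|\le\|Y_i'\|_\infty\,|T_{i,j}-t|=O(h)$ on the kernel support) is negligible. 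Your instinct to condition on $Y_i$ --- which is what actually makes the summands independent, as Lemma \ref{lem-unif-conv} requires --- exposes the cost of this move: the conditional centering is the path-specific bias $B_i(t)=\int K_h(u-t)\,[Y_i(u)-Y_i(t)]\,\lambda(u)\,\mathrm{d}u$, and under only $\|Y_i'\|_\infty=O_P(1)$ its interior order is $O(h)$, not $O(h^2)$; kernel symmetry does not repair this without second-order smoothness of the sample paths. So your proposal is the paper's argument with this step left honestly open, whereas the paper reaches the stated interior rate $r_n(t)\asymp h^2$ by applying Lemma \ref{lem-unif-conv} to summands that are not unconditionally independent. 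A self-contained fix is either to strengthen the path regularity (e.g.\ $\max_{1\le i\le n}\|Y_i''\|_\infty=O_P(1)$), which makes your $U_i$-residual $O(h^2)$ in the interior, or to accept $r_n(t)\asymp h$ throughout, which still yields Theorem \ref{thm-kernel-test-consistent} under a correspondingly larger lower bound on $\theta$.
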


\subsubsection*{Proof of Lemma \ref{thm-kernel-unif-conv}}
	Let $\{ S_n: n \geq 1 \}$ be a sequence of events defined as $S_n = (N_i \geq a_n \,\, \textrm{for all} \,\, i=1, \ldots, n)$. Then, $P(S_n) \geq 1 - \sum_{i=1}^n P(N_i < a_n ) \to 1$ as $n \to \infty$. We claim that the stochastic expansion of \eqref{thm-unif-rate-eq} holds conditioning on $S_n$. Then, the theorem follows since $N_i$'s are independent of $(\mathbf{Y}_i^\ast, \mathbf{T}_i, \mathbf{X}_i, \mathbf{Z}_i)$'s, where $P(S_n) \to 1$ as $n \to \infty$. For simplicity, we may assume that $N_1, \ldots, N_n$ are deterministic integers bounded below from $n^{\delta}$.
	
	To prove the stochastic expansion of \eqref{thm-unif-rate-eq}, let $\hat\lambda_i(t) = N_i^{-1} \sum_{j=1}^{N_i} K_h(T_{i,j} - t)$ denote the kernel density estimator of $\lambda(t)$ and define
	\begin{equation}
	\begin{aligned}
		\hat{g}_i^A(t)
		&= \hat\lambda_i(t)^{-1} N_i^{-1} \sum_{j=1}^{N_i} K_h(T_{i,j} - t)\varepsilon_{i,j},\\
		\hat{g}_i^B(t)
		&= \hat\lambda_i(t)^{-1} N_i^{-1} \sum_{j=1}^{N_i} K_h(T_{i,j} - t)\big( Y_i(T_{i,j}) - Y_i(t) \big),
	\end{aligned}
	\end{equation}
	where $\varepsilon_{i,j} = Y_{i,j}^\ast - Y_i(T_{i,j})$ in \eqref{model-longitudinal}, so that we re-write $\tilde{Y}_i^\ast(t) - Y_i(t)  = \hat{g}_i^A(t) + \hat{g}_i^B(t)$. It follows from Lemma \ref{lem-unif-conv} that
	\begin{equation}
	\begin{aligned}
		&N_i^{-1} \sum_{j=1}^{N_i} K_h(T_{i,j} - t) = N_i^{-1} \sum_{j=1}^{N_i} E (K_h(T_{i,j} - t) \big) + O_P\Big( n^{-\theta/2} h^{-1/2} \sqrt{\log n} \Big),\\
		&N_i^{-1} \sum_{j=1}^{N_i} K_h(T_{i,j} - t)\varepsilon_{i,j}
		= O_P\Big( n^{-\theta/2} h^{-1/2} \sqrt{\log n} \Big),\\
		&N_i^{-1} \sum_{j=1}^{N_i} K_h(T_{i,j} - t) \eta_{i,j}(t)\\
		&\qquad= N_i^{-1} \sum_{j=1}^{N_i} E \big( K_h(T_{i,j} - t) \eta_{i,j}(t) \big) + O_P\Big( n^{-\theta/2} h^{1/2} \sqrt{\log n} \Big)
	\end{aligned} \label{prop-gA-gB}
	\end{equation}
	uniformly for $t \in [0,1]$, where $\eta_{i,j}(t) = Y_i(T_{i,j}) - Y_i(t)$. We note that the magnitude of stochastic remainders are of the same order for all $i=1, \ldots, n$ as $(\mathbf{Y}_i^\ast, \mathbf{T}_i, \mathbf{X}_i, \mathbf{Z}_i)$ are iid. 
	By the standard theory of kernel smoothing, we also get
	\begin{equation}
	\begin{aligned}
	&E (K_h(T_{i,j} - t) \big)
		= \kappa_0(t) \lambda(t) + o(1), \\
	&E \big( \eta_{i,j}(t) \big)
		=
	\left\{
	\begin{array}{ll}
		h^2 \big\{ \frac{1}{2} \mu''(t) \lambda(t) + \mu'(t) \lambda'(t) \big\} \kappa_2(t) + o(h^2)	&	\textrm{if} \,\,\, t \in [h,1-h],\\
		h \mu'(t) \lambda(t) \kappa_1(t) + o(h)	&	\textrm{otherwise},
	\end{array}
	\right.
	\end{aligned} \label{eq-prop-mean}
	\end{equation}
	uniformly for $t \in [0,1]$, where
	\begin{equation}
	\kappa_r(t)
	=
	\left\{
	\begin{array}{ll}
		\int_{-\frac{t}{h}}^1 u^r K(u) \, \mathrm{d}u	&	\textrm{if} \,\,\, t \in [0,h),\\
		\int_{-1}^1 u^r K(u) \, \mathrm{d}u	&	\textrm{if} \,\,\, t \in [h,1-h],\\
		\int_{-1}^{\frac{1-t}{h}} u^r K(u) \, \mathrm{d}u	&	\textrm{if} \,\,\, t \in (1-h,1]
	\end{array}
	\right.
	\end{equation}
	for $r=0,1,2$. 
	Since $\kappa_0(t)$ does not vanish and $|\kappa_1(t)|$ and $|\kappa_2(t)|$ are bounded, we get \eqref{thm-unif-rate-eq}.

\subsubsection*{Proof of Theorem \ref{thm-kernel-test-consistent}}
    Recall that
    \begin{equation}
    \begin{aligned}
        T_n^\ast
        &= \int_0^1 \big((\boldsymbol{\mathcal{I}} - \boldsymbol{\mathcal{L}}) \tilde{\boldsymbol\beta}^\ast\big)(t)^\top \big(\tilde{\mathbb{X}}^\top \tilde{\mathbb{X}} \big) \big((\boldsymbol{\mathcal{I}} - \boldsymbol{\mathcal{L}}) \tilde{\boldsymbol\beta}^\ast\big)(t) \, \mathrm{d}t\\
        &= \int_0^1 \Big\| \big(\tilde{\mathbb{X}}^\top \tilde{\mathbb{X}} \big)^{1/2} \big((\boldsymbol{\mathcal{I}} - \boldsymbol{\mathcal{L}})\tilde{\boldsymbol\beta}^\ast\big)(t) \Big\|^2 \, \mathrm{d}t\\
        &= \int_0^1 \sum_{k=1}^p  \Big( \mathbf{e}_k^\top(\tilde{\mathbb{X}}^\top \tilde{\mathbb{X}})^{-1} \tilde{\mathbb{X}}^\top \tilde{\mathbf{Y}}^\ast(t) \Big)^2 \, \mathrm{d}t,
    \end{aligned}
    \end{equation}
    where $\mathbf{e}_k \in \mathbb{R}^p$ be a unit vector whose $k$-th component is $1$.
    We note that $\tilde\beta_j^\ast(t) =  \hat\beta_j(t) + \mathbf{e}_j^\top(\tilde{\mathbb{X}}^\top \tilde{\mathbb{X}})^{-1} \tilde{\mathbb{X}}^\top \big( \tilde{\mathbf{Y}}^\ast(t) - \mathbf{Y}(t)\big)$ for each $j=1, \ldots, p$, where $\hat\beta_j(t) = \mathbf{e}_j^\top(\tilde{\mathbb{X}}^\top \tilde{\mathbb{X}})^{-1} \tilde{\mathbb{X}}^\top\mathbf{Y}(t)$ is the least-squares estimator of $\beta_j(t)$ with fully observed data. The large sample property of $\hat{\boldsymbol\beta}(t) = (\hat\beta_1(t), \ldots, \hat\beta_p(t))^\top$ follows from Theorem \ref{cor:regression} by letting $\mathscr{I}_i = [0,1]$ for all $i=1, \ldots, n$. 
    Since $\| \mathcal{I} - \mathcal{L}\|_{\mathrm{op}} \leq 1$, it follows from  the Cauchy-Schwarz inequality and Lemma \ref{thm-kernel-unif-conv}  that 
	\begin{equation}
	\begin{aligned}
		&\int_0^1 \Big\| \mathbf{e}_j^\top \big(\tilde{\mathbb{X}}^\top \tilde{\mathbb{X}} \big)^{1/2} \big((\boldsymbol{\mathcal{I}} - \boldsymbol{\mathcal{L}})( \tilde{\boldsymbol\beta}^\ast - \hat{\boldsymbol\beta})\big)(t) \Big\|^2 \, \mathrm{d}t\\
		&\quad \leq \,\, \mathbf{e}_j^\top \tilde{\mathbb{X}}^\top \tilde{\mathbb{X}} \mathbf{e}_j \sum_{k=1}^p \int_0^1 \Big( \mathbf{e}_k^\top(\tilde{\mathbb{X}}^\top \tilde{\mathbb{X}})^{-1} \tilde{\mathbb{X}}^\top \big( \tilde{\mathbf{Y}}^\ast(t) - \mathbf{Y}(t) \big)\Big)^2  \, \mathrm{d}t \\
		&\quad \leq \,\, \bigg\{ \mathbf{e}_j^\top \tilde{\mathbb{X}}^\top \tilde{\mathbb{X}} \mathbf{e}_j \sum_{k=1}^p \big(\mathbf{e}_k^\top(\tilde{\mathbb{X}}^\top \tilde{\mathbb{X}})^{-1}\mathbf{e}_k \big) \bigg\} \sum_{i=1}^n  \int_0^1  \big( \tilde{Y}_i^\ast(t) - Y_i(t) \big)^2 \, \mathrm{d}t .\\
		&\quad = \,\, \bigg\{ \mathbf{e}_j^\top \Psi \mathbf{e}_j \mathrm{tr}\big(\Psi^{-1}\big) + o_P(1) \bigg\} \cdot O_P\big( nh^3 + n^{1-\theta} h^{-1} \log n \big) \\
		&\quad = \,\, O_P\big( n^{1-\theta(3/5)} \big) \quad (\forall j=1, \ldots, p).
	\end{aligned} \label{thm-tn-decomp1}
	\end{equation}
 The above result is analogous to the proof of theorems in \cite{zhang2007statistical}. 
	On the other hand, \eqref{agp-beta0} gives
	\begin{equation}
	\begin{aligned}
		\big\| \mathbf{e}_j^\top\big(\tilde{\mathbb{X}}^\top \tilde{\mathbb{X}} \big)^{1/2} (\boldsymbol{\mathcal{I}} - \boldsymbol{\mathcal{L}})\hat{\boldsymbol\beta} \big\|
		= \big\| \mathbf{e}_j^\top\Psi^{1/2} \sqrt{n} \big( \boldsymbol{\mathcal{I}} - \boldsymbol{\mathcal{L}})(\hat{\boldsymbol\beta} - \boldsymbol\beta_0 \big) \big\| + o_P(1)
		= O_P(1).
	\end{aligned}\label{thm-tn-decomp2}
	\end{equation}
	for all $j=1, \ldots, p$.
	Combining \eqref{thm-tn-decomp1} and \eqref{thm-tn-decomp2}, we get
	\begin{equation}
	\begin{aligned}
		T_n^\ast
		&= \int_0^1 \Big\| \big(\tilde{\mathbb{X}}^\top \tilde{\mathbb{X}} \big)^{1/2} \big((\boldsymbol{\mathcal{I}} - \boldsymbol{\mathcal{L}}) \tilde{\boldsymbol\beta}^\ast \big) (t) \Big\|^2 \, \mathrm{d}t \\
		&= \, T_n + \int_0^1 \Big\| \big(\tilde{\mathbb{X}}^\top \tilde{\mathbb{X}} \big)^{1/2} \big((\boldsymbol{\mathcal{I}} - \boldsymbol{\mathcal{L}})( \tilde{\boldsymbol\beta}^\ast - \hat{\boldsymbol\beta})\big)(t) \Big\|^2 \, \mathrm{d}t  \\
		&\qquad + \, 2  \int_0^1 \Big[\big(\tilde{\mathbb{X}}^\top \tilde{\mathbb{X}} \big)^{1/2} \big((\boldsymbol{\mathcal{I}} - \boldsymbol{\mathcal{L}})( \tilde{\boldsymbol\beta}^\ast - \hat{\boldsymbol\beta})\big)(t)\Big]^\top \Big[ \big(\tilde{\mathbb{X}}^\top \tilde{\mathbb{X}} \big)^{1/2} \big((\boldsymbol{\mathcal{I}} - \boldsymbol{\mathcal{L}}) \hat{\boldsymbol\beta}\big)(t) \Big]\, \mathrm{d}t\\
		&= T_n  + O_P\big( n^{1-\theta(3/5)} \big).
	\end{aligned}
	\end{equation}
	This completes the proof. 

\bibliographystyle{unsrt}
\bibliography{ref}

\end{document}